\documentclass[reqno,11pt]{amsart}
\usepackage{geometry}

\usepackage{amssymb,amsfonts,amsthm}
\usepackage[english]{babel}
\usepackage[pdftex]{graphicx} 
\usepackage[square,comma,numbers]{natbib}
\usepackage{etoolbox}
\usepackage{enumerate}
\usepackage{amsmath,latexsym}
\usepackage{amsmath}
\usepackage{amsfonts}
\usepackage{amssymb}
\usepackage{hyperref}
\usepackage{enumitem}
\usepackage{color}
\usepackage{mathrsfs}
\usepackage{soul}
\usepackage[colorinlistoftodos,textsize=footnotesize,color=blue!25]{todonotes}

\setstcolor{red}
\newtheorem{prop}{Proposition}
\newtheorem{defi}[prop]{Definition}

\newtheorem{lemma}[prop]{Lemma}
\newtheorem{thm}[prop]{Theorem}

\usepackage{bbm}


\newcommand{\no}[1]{\mathop{\mathopen: {#1} \mathclose:}}

\def\be             {\begin{equation}}
	\def\ee             {\end{equation}}

\newcommand{\beqa}{\begin{eqnarray*}}
	\newcommand{\eeqa}{\end{eqnarray*}}

\newcommand{\beqan}{\begin{eqnarray}}
	\newcommand{\eeqan}{\end{eqnarray}}

\newcommand{\x}{\mathbf{x}}

\long\def\labl#1   {\label{#1}\ee}

\newcommand{\beq}{\begin{equation}}
\newcommand{\eeq}{\end{equation}}

\newcommand{\benu}{\begin{enumerate}}
	\newcommand{\eenu}{\end{enumerate}}

\newcommand{\sst}[1]{\scriptscriptstyle{#1}}  

\newcommand{\C}{\mathbb C}
\newcommand{\R}{\mathbb R}

\newcommand{\M}{\mathbb M}

\newcommand{\loc}{{\rm loc}}

\newcommand{\const}{{\rm const}}

\DeclareMathOperator{\WF}{\rm WF}
\DeclareMathOperator{\supp}{\rm supp}

\newcommand{\fA}{\mf A}

\newcommand{\ph}{\phi}

\newcommand{\tsf}[2]{{\textstyle{\frac{#1}{#2}}}}

\newcommand{\mf}[1]{\mathfrak{#1}}
\newcommand{\mc}[1]{\mathcal{#1}}

\newcommand{\phic}{\varphi}

\newcommand{\cD}{\mc D}
\newcommand{\cDd}{\cD_{\rm den}}

\newcommand{\cE}{\mc E}
\newcommand{\cF}{\mc F}

\newcommand{\cH}{\mc H}
\newcommand{\cT}{\mc T}

\newcommand{\cS}{\mc S}
\newcommand{\Ci}{{\mc C}^\infty}
\newcommand{\cO}{\mc O}

\newcommand{\sD}{\mathscr{D}}
\newcommand{\Ppsi}{\Phi_{\psi}}
\newcommand{\T}{\cdot_{{}^\cT}}

\newcommand{\Psifxi}{\pi_{\psi}(e^{i\Phi(f)})\Omega_0\otimes \xi}

\title{Local nets of von Neumann algebras in the Sine-Gordon model}
\author{Dorothea Bahns, Klaus Fredenhagen, Kasia Rejzner}

\begin{document}
\maketitle
\begin{abstract}
The Haag-Kastler net of local von Neumann algebras is constructed in the ultraviolet finite regime of the sine-Gordon model, and its equivalence with the massive Thirring model is proved.
In contrast to 
other authors, we do not add an auxiliary mass term, and we work completely in Lorentzian signature. The construction is based on the functional formalism for perturbative Algebraic Quantum Field Theory together with estimates originally derived within Constructive Quantum Field Theory and adapted to Lorentzian signature. The paper extends previous work by two of us.
\end{abstract}
\tableofcontents

	
\section{Introduction}

The classical sine-Gordon model is one of the most interesting integrable field theories, and its quantization has been treated since long by many authors
(see e.g. \cite{Faddeev}).
 One strategy is the ansatz with factorizing S-matrices, where the integrable structure is exploited. The corresponding local fields are approached in the so-called form factor program, which, however, has problems in proving the convergence of the arising series \cite{Karowski, Smirnov}. A direct construction of the local von Neumann algebras has been successfully carried out for similar models by Lechner \cite{Lechner}, based on ideas of Schroer \cite{Schroer} and Buchholz, in analogy with \cite{Brunetti}. For models closer to the sine-Gordon model, where this construction is not directly applicable, encouraging results have been found by Cadamuro and Tanimoto \cite{Cadamuro}.

Another strategy is the construction of the model by methods of Constructive Quantum Field Theory. This was performed by Fr\"ohlich and Seiler \cite{Frohlich-Seiler, Frohlich} within the framework of Euclidean Quantum Field Theory. But their methods required the introduction of an extra mass term, or alternatively, a spatial cutoff, due to the infrared problems of the massless free scalar field. It turned out to be difficult to remove the mass term at the end, and also the integrable structure was not visible  in their construction.  In the more recent paper by Benfatto et al. \cite{Benfatto} the equivalence to the Thirring model was shown for the Euclidean theory using a finite volume cutoff.

In an earlier paper  \cite{BR16}, two of us showed that the perturbative expansion of the $S$-matrix with a spacetime cutoff, as well as that of the corresponding  interacting fields, converge. This was achieved in the framework of perturbative Algebraic Quantum Field Theory (pAQFT) \cite{Rejzner}, i.e. on the level of functionals on the theory's configuration space, without a particular choice of a state (viz. a representation on a Hilbert space).  In this paper, we extend these results in the following way. Since the vacuum state of the massless free scalar field is not a regular state on the Weyl algebra of the field, we use a representation introduced by Derezinski and Meissner \cite{DM06}, quite similar to the representation used in early day string theory. We show that in this representation the S-matrix (as a generating functional for time ordered products of the interaction Lagrangian) is unitary and satisfies Bogoliubov's causal factorization condition. Based on this property we construct a family of unitary operators - the relative S-matrices -  which generate the local algebras of observables of the model.

We then discuss the equivalence with the massive Thirring model, first observed by Coleman \cite{Coleman75}. In the functional formalism, we give  an explicit construction of the massless Thirring model within the theory of the massless free scalar field. The equivalence of the massive case with the sine-Gordon model then becomes evident since the interaction Lagrangians coincide.

The paper is organized as follows: We first review the Derezinski-Meissner representation of the free massless scalar field and prove that, as a representation of the canonical commutation relations of time zero fields and their conjugate momenta, it is locally quasiequivalent to the vacuum representations of the massive free scalar fields. The local quasiequivalence between massive theories of different masses was shown long ago by Eckmann and Fr\"ohlich \cite{EF74}. The expectation is that the sine-Gordon theory is massive, and the result on its local quasiequivalence with the massive free theory suggests that indeed, the local von Neumann algebras generated by the relative S-matrices in the Derezinski-Meissner representation coincide with those which one would obtain in a vacuum representation of the model that however still needs to be constructed.  

In the following section we review and extend the construction of local S-matrices from \cite{BR16} and prove that they are unitary operators satisfying Bogoliubov's causal factorization  relation.
We construct bounded operators of the interacting theory corresponding to time ordered exponentials of the field and of vertex operators. 

In the last section we discuss the relation to the Thirring model by a rigorous version of a construction originally described by Mandelstam \cite{Mandelstam} (see also \cite{Hadjiivanov} for a detailed description) . For this purpose, we extend the theory of the free massless scalar field $\Phi$ by adding a dual field $\tilde{\Phi}$ with $\partial_{\mu}\Phi=-\epsilon_{\mu\nu}\partial^{\nu}\tilde{\Phi}$. In this extended theory fermionic fields can be defined which satisfy the field equation of the massless Thirring model. Moreover, the interaction Lagrangian of the sine-Gordon model is shown to agree with the fermionic mass term.
Therefore, the convergence result of the previous section immediately implies the convergence of the mass expansion of the Thirring model,  both in the representation induced by the Derezinski-Meissner representation, and in the vacuum representation of the massless Thirring model used in Coleman's original argument.                

\section{Free massless scalar field in 2 dimensions}
The free massless field in 2 dimensions is probably the simplest field theory one can think of. The equation of motion 
\be\square \phi=0\ee
has the general solution
\be\phi(t,\x)=\phi_L(t+\x)+\phi_R(t-\x)\ee
with arbitrary functions $\phi_L$ and $\phi_R$.

Surprisingly, the corresponding quantum field theory has some features which do not fit into the standard formalism of quantum field theory. There, quantization of a free field theory usually starts by interpreting the solutions 
with positive frequency as wave functions of particles. The Fourier transform of these wave functions are supported on the positive mass shell in momentum space. By using the (up to normalization) unique Lorentz invariant measure on the mass shell one equips the space of wave functions with a positive definite scalar product and obtains the Hilbert space of single particle states. The quantum field can then be defined in terms of annihilation and creation operators on the Fock space over the single particle space.

In 2 dimensions, however, the only Lorentz invariant  measure on the positive light cone (the mass shell for zero mass) is the Dirac measure concentrated at $p=0$.  
As a consequence, the standard construction breaks down, which often is interpreted as saying that the massless scalar field in 2 dimensions does not exist. Indeed, the Wightman axioms cannot be satisfied, and also the Osterwalder-Schrader axioms for the euclidean version cannot be fulfilled.

The problem disappears when one looks instead at the derivatives of the field. For them the standard quantization on a Fock space exists. In order to include also the field itself it is useful to adopt the algebraic point of view of first constructing the algebra of quantum fields and to investigate in a second step the states (defined as normalized positive functionals on the algebra) or, equivalently, the representations on Hilbert spaces. As a result, one finds, that the algebra of the free field exists, but does not possess a vacuum state (whose existence is one of the Wightman axioms). So the problems mentioned above find their natural explanation in the fact that the state space does not have the properties expected from our experience with other free scalar fields (massive or in higher dimensions).

These facts seem to be well known among experts. There is, however, a lack of explicit examples of states for the free field satisfying some natural requirements. From our experience with field theories on generic Lorentzian spacetimes a good class of states are quasifree states where the 2-point function satisfies the so-called Hadamard condition. We are aware of only two places where such states were explicitly constructed: one is the diploma thesis of Sebastian Schubert \cite{Schubert}, where he found a surprisingly simple example of a Hadamard 2-point function. The other place is a paper of Derezinski and Meissner \cite{DM06}, where they explicitly construct a representation on a separable Hilbert space. Actually, Schubert's states are vector states in this representation.

\subsection{The algebra}\label{Algebra}
As customary in pAQFT, we start from the space of smooth field configurations $\mathcal{E}(\M)=\Ci(\M,\R)$ on 2 dimensional Minkowski space $\M$. To fix the notation, let $\mathcal{D}(\M)=\Ci_c(\M,\R)$ be the space of compactly supported functions and $\cDd(\M)$, the space of compactly supported densities (test densities).

 The field equation induces a Poisson structure on the configuration space, given by the difference $\Delta$ of the advanced and the retarded propagator of the operator $P=-\Box$ (i.e. minus the d'Alembertian) on $\mathcal{E}(\M)$ and
\be\label{eq:Delta}
\Delta(x,y)=-\tsf 1 2 \Theta(x^0-y^0-|x^1-y^1|)+\tsf 1 2 \Theta(-x^0+y^0-|x^1-y^1|)\,.
\ee
$\Delta$ induces a linear map (also denoted by $\Delta$) from $\cDd(\M)$ to $\mathcal{E}_{\rm Sol}(\M)\subset\cE(\M)$ the space of
smooth solutions of the equation $P\ph=0$, by the prescription
\[
\Delta f(x)\doteq\int_y \Delta(x,y)f(y)\,,\quad f\in\cD_{\rm den}(\M)\,.
\]
The Poisson bracket of two smooth functions $F,G$ on $\mathcal{E}(\M)$, which have compact spacetime support and 
smooth first functional derivatives $\frac{\delta F}{\delta\ph},\frac{\delta G}{\delta\ph}\in\cD_{\rm den}(\M)$, is then 
\[\{F,G\}=\text{\small $\int$}\frac{\delta F}{\delta\ph}\left(\Delta\frac{\delta G}{\delta\ph}\right)\equiv \left<F^{(1)},\Delta G^{(1)}\right>\,,\]
where the notation $\left<.,.\right>$ emphasizes the duality between $\cDd$ and $\cE$.

The algebra of the quantum field is defined by deformation quantization, by looking for a family of associative products $\star_{\hbar}$ of functionals on $\mathcal{E}(\M)$ such that in the limit $\hbar\to 0$ 
\be F\star_{\hbar}G\to FG\ee
and
\be\frac{1}{i\hbar}(F\star_{\hbar}G-G\star_{\hbar}F)\to\{F,G\}\ .\ee

An example for such a $\star_\hbar$-product is the Weyl-Moyal product
\be (F\star G)[\ph]=
\sum_{n=0}^\infty \frac{\hbar^n}{n!}\left<F^{(n)}[\ph],(\tfrac i 2 \Delta)^{\otimes n} G^{(n)}[\ph]\right>\ .\ee
defined as a formal power series in $\hbar$ for {\em regular} functionals $F, G \in \cF_{\rm reg}(\M)$, that is, the space of functionals on $\mc E(\M)$ whose functional derivatives exist and are compactly supported smooth densities, $F^{(n)}[\ph],G^{(n)}[\ph]\in\cD_{\rm den}(\M^n)$.

Other star products are obtained by adding a symmetric bisolution of $P$ to $\frac{i}{2}\Delta$. Poincar\'{e} invariant symmetric bisolutions are multiples of 
\be\label{Hmu}
H_\mu(x,y)=-\frac{1}{4\pi}\ln(\mu^2|(x-y)^2|)
\ee
with a mass scale $\mu$ which is a dimensionful quantity which takes positive values measured in inverse length.

It turns out that the sums
\[
W_{\mu}=\frac{i}{2}\Delta + H_\mu=-\frac 1 {4\pi} \ln\left(\mu^2(-x\cdot x) + i \mu \varepsilon t\right)
\]
satisfy the so-called Hadamard condition, i.e. their wavefront sets fulfill the microlocal spectrum condition \cite{Radzikowski}. 
Considered as distributions in the difference variable this means that their wave front sets are 
\[\WF(W_{\mu})=\{(x,k)\in T^{*}\M|x\cdot x =0, \langle k,x\rangle=0, k\in\overline{V_+}\setminus 0\}\ , \]
where $\M$, at every point $x$, is identified with its tangent space and $\overline{V_+}$ denotes the closed forward lightcone in momentum space. 
The corresponding star products
\be\label{prod}
F\star_\mu G [\ph] \doteq \sum_{n=0}^\infty \frac{\hbar^n}{n!}\left<F^{(n)}[\ph],W_{\mu}^{\otimes n} G^{(n)}[\ph]\right>
\ee
are defined on a larger class of functionals, including in particular the local ones. An appropriate class is formed by the so-called microcausal functionals $F\in\cF_{\mu c}$. These are functionals $F$ where all functional derivatives $F^{(n)}$ exist as symmetric distributional densities
 whose wavefront sets satisfy the condition
 \be\label{eq:microcausal}
 \WF(F^{(n)})\cap \M^n\times \overline{V_+}^n=\emptyset\ .
 \ee
 On the space of regular functionals, all these products are equivalent. To see this, consider the linear invertible map 
 $\alpha_{\sst H_\mu}:\cF_{\rm reg}[[\hbar]]\rightarrow \cF_{\rm reg}[[\hbar]]$,
\[
\alpha_{H_\mu}\doteq e^{\frac{\hbar}{2}\cD_{H_\mu}}\,,
\]
where, in terms of formal integral kernels
\[
\cD_{H_\mu}\doteq \left\langle H_\mu,\frac{\delta^2}{\delta\ph^2}\right\rangle=\int H_\mu(x,y) \frac{\delta^2}{\delta \ph(x) \delta \ph(y)} \,.
\]
Then 
\[\alpha_{H_\mu}(F\star G)=\alpha_{\sst H_\mu} F\star_\mu \alpha_{\sst H_\mu} G\ .\]
Via the linear isomorphism 
\[
\beta_{\mu'/\mu}\doteq
\alpha_{H_{\mu^\prime}-H_\mu}
=e^{-\frac{\hbar}{4\pi}\int\ln(\frac{\mu'}{\mu}) \frac{\delta^2}{\delta\ph^2}}\ ,
\]
the star products $\star_\mu$ are mutually equivalent not only on the regular functionals, but on the larger space of microcausal functionals $\cF_{\mu c}[[\hbar]]$.
This is due to the fact that $H_{\mu'}-H_{\mu}$ is smooth. 

We can identify the elements of the algebra $(\cF_{\rm reg}[[\hbar]],\star)$
with normal ordered functionals
\be
\no{F}{}_{\mu}=(\alpha_\mu)^{-1}(F)\ .
\ee
with $F\in \cF_{\rm reg}[[\hbar]]$.
They satisfy the relation
\be\label{no-product}
\no{F}{}_\mu\star \no{G}{}_\mu=\no{F\star_\mu G}{}_{\mu}\ .
\ee
We now enlarge the algebra by elements $\no{F}_{\mu}$ with $F\in\cF_{\mu c}$ with the relations
\be \no{F}{}_\mu+\no{G}{}_\mu=\no{F+G}{}_\mu\ ,\ \lambda\no{F}{}_\mu=\no{\lambda F}{}_\mu\ ,\ \no{F}{}_{\mu}^*=\no{\overline{F}}{}_\mu\ee
and \eqref{no-product} for the product and obtain a *-algebra $\fA$.
This extension  may be understood as a completion in a suitable topology as discussed in \cite{BR16}. In general, the elements of $\fA$ can no longer be interpreted as functionals. 

In the following we set $\mu=1$. This means that the normal ordered elements are dimensionful objects which 
get numerical values only after the choice of a length unit. To simplify the notation we write $\no{F}{}_1=\no{F}$.

Now, given $g \in \cDd (\M)$, $a \in \R$, we consider the so-called vertex operators  $V_a(g)$ as normal ordered 
version of the functionals
\be\label{vertex}
v_{a}(g):\ph \mapsto \int e^{ia \ph(x)}\, g(x) \ .
\ee
i.e. $V_a(g)=\no{v_a(g)}$.
Note that, by normal ordering at $\mu=1$,  the vertex operators get mass dimension $\frac{\hbar a^2}{4\pi}$. 

The functional derivatives of the functionals of $v_a(g)$ are given by 
\[
\langle v_a(g)^{(n)}[\ph],h^{\otimes n}\rangle=i^na^n v_a(h^ng) \ ,
\] 
and satisfy the WF set condition \eqref{eq:microcausal} imposed for microcausal functionals. The star products of vertex operators converge if $\hbar |a|^2<4\pi$,
\begin{align}
&\left(v_{a_1}(g_1)\star_1 v_{a_2}(g_2)\right)[\ph]\\
&=\left(\sum_{n=0}^{\infty}\frac{\hbar^n}{n!}\left(\frac{a_1a_2}{4\pi}\right)^n\int(\ln(-(x-y)\cdot(x-y)+i\epsilon(x^0-y^0)))^ng_1(x)g_2(y)e^{i(a_1\ph(x)+a_2\ph(y))}\right)\\
&=\int (-(x-y)\cdot(x-y)+i\epsilon(x^0-y^0))^{\frac{\hbar a_1a_2}{4\pi}}g_1(x)g_2(y)e^{i(a_1\ph(x)+a_2\ph(y)}\ .
\end{align} 
The interaction Lagrangian of the sine-Gordon model is given in terms of vertex operators by
\[
V(g)\doteq \frac 1 2(V_a(g)+V_{-a}(g))\,.
\]
with $a>0$.

\subsection{States}
States are defined as normalized linear forms on the algebra. An interesting class of states are the {\em quasifree} ones. They are labeled by a symmetric real valued bisolution $H$ of the field equation which dominates $\Delta$ in the sense that
\be\label{quasifree}
\langle f,\Delta g\rangle^2\le 4\langle f,Hf\rangle\langle g,Hg\rangle\ee
for real valued test densities $f,g$. Here $H$ induces, as before $\Delta$, a linear map from test densities to smooth solutions. Given such $H$, every field configuration $\ph$ induces a state via the prescription 
\be\label{eq:omH}
\omega_{H,\ph}(\no{F})=\alpha_{H-H_1}(F)[\ph]
\ee

The bi-distribution $H_\mu$ given by \eqref{Hmu} satisfies the inequality \eqref{quasifree} and $\langle f, H_\mu f\rangle \geq 0$ only for test densities whose integral vanishes. Hence, in~\cite{Schubert}, Schubert modified $H_\mu$ by choosing 
a test density  $\psi\in\cDd(\M)$ with integral 1 and setting for  $r> 0$, 
\be \label{eq:HSchub}
H(x,y)=H_\mu(x,y)-H_\mu\psi(x)-H_\mu\psi(y)+\int\psi H_\mu\psi+\frac{1}{2r^2}\Delta\psi(x)\Delta\psi(y)+\frac{r^2}{2}\ .\ee,
$H$ is again a bi-solution, and since it differs from $H_\mu$ only by a smooth function, it satisfies the Hadamard condition. Contrary to $H_\mu$ it is positive semi-definite. Before proving that indeed, it satisfies \eqref{quasifree},  we introduce some notation. In the following, we use the Dirac bra-ket notation by writing
\be
|f\rangle\langle g|h \rangle\doteq f\langle g,h\rangle
\ee
whenever $g$ and $h$ are in duality to each other. We also introduce the projection $P_{\psi}:\cDd(\M)\rightarrow \cDd(\M)$,
\be\label{eq:DefPpsi}
P_{\psi}=1-|\psi\rangle\langle1|
\ee
and its transpose  (acting on $\cE(\M) \subset \cDd^\prime(\M)$)   $P_{\psi}^T:\cE(\M)\rightarrow\cE(\M)$, 
\be
P_{\psi}^T=1-|1\rangle\langle\psi|\ .
\ee
In this notation, we have
\be
H=P_{\psi}^TH_{\mu}P_{\psi}+\frac{1}{2r^2}|\Delta\psi\rangle\langle\Delta\psi|+\frac{r^2}{2}|1\rangle\langle 1|
\ee
and
\be
\Delta=P_{\psi}^T\Delta P_{\psi}+|\Delta\psi\rangle\langle 1|-|1\rangle\langle\Delta\psi|\ ,
\ee
both understood as maps $\cDd(\M) \rightarrow \cE(\M)$.

\begin{lemma}
The bi-distribution $H$ satisfies the condition \eqref{quasifree}.
\end{lemma}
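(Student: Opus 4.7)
The plan is to reduce the claimed inequality for $H$ to the corresponding quasifree inequality for $H_\mu$ restricted to the subspace of test densities of vanishing integral, and then to combine that restricted bound with the two remaining positive terms in the definition of $H$ by a three-dimensional Cauchy--Schwarz estimate.

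First, given real test densities $f,g$, set $c_f=\langle 1,f\rangle$ and $f_\psi=P_\psi f = f - c_f\psi$ (so $\langle 1,f_\psi\rangle=0$), and analogously $c_g,g_\psi$. Substituting the displayed decompositions of $H$ and $\Delta$ into $\langle f,\Delta g\rangle$ and $\langle f,Hf\rangle$, and using the antisymmetry of $\Delta$ (so that $\langle \psi,\Delta\psi\rangle=0$ and $\langle f,\Delta\psi\rangle=\langle f_\psi,\Delta\psi\rangle$) together with the symmetry of $H_\mu$, a direct computation yields the identities
\[
\langle f,\Delta g\rangle = A + c_g B - c_f C,
\qquad
\langle f,Hf\rangle = \langle f_\psi, H_\mu f_\psi\rangle + \tfrac{1}{2r^2} B^2 + \tfrac{r^2}{2} c_f^2,
\]
where $A = \langle f_\psi,\Delta g_\psi\rangle$, $B=\langle f,\Delta\psi\rangle$, $C=\langle g,\Delta\psi\rangle$, and an analogous formula holds for $\langle g,Hg\rangle$.

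The key intermediate step, and the main obstacle, is the ``reduced'' bound $A^2 \le 4\,\langle f_\psi,H_\mu f_\psi\rangle \langle g_\psi,H_\mu g_\psi\rangle$, i.e.\ the quasifree inequality for $H_\mu$ restricted to densities of vanishing integral. On this subspace $H_\mu$ is positive semidefinite, as already recorded in the text, and $W_\mu=\tfrac{i}{2}\Delta+H_\mu$ is a positive Wightman two-point function because it satisfies the Hadamard condition; the bound then follows by testing $W_\mu$ on $\lambda f_\psi + i\mu g_\psi$ and requiring the resulting real $2\times 2$ Gram matrix to be positive semidefinite. This is precisely the point at which the obstruction to $H_\mu$ defining a state on all of $\cDd(\M)$ is made irrelevant by working on the range of $P_\psi$.

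Finally, set $\alpha=\sqrt{\langle f_\psi,H_\mu f_\psi\rangle}$, $\beta=|B|/(r\sqrt{2})$, $\gamma=r|c_f|/\sqrt{2}$, together with primed analogues $\alpha',\beta',\gamma'$ for $g$. Then $\langle f,Hf\rangle=\alpha^2+\beta^2+\gamma^2$, $\langle g,Hg\rangle=\alpha'^2+\beta'^2+\gamma'^2$, while the reduced bound gives $|A|\le 2\alpha\alpha'$ and elementary arithmetic gives $|c_g B|=2\beta\gamma'$, $|c_f C|=2\gamma\beta'$. Combining the triangle inequality with Cauchy--Schwarz in $\R^3$ applied to $(\alpha,\beta,\gamma)$ and $(\alpha',\gamma',\beta')$ yields
\[
|\langle f,\Delta g\rangle| \le 2(\alpha\alpha'+\beta\gamma'+\gamma\beta') \le 2\sqrt{(\alpha^2+\beta^2+\gamma^2)(\alpha'^2+\gamma'^2+\beta'^2)} = 2\sqrt{\langle f,Hf\rangle \langle g,Hg\rangle},
\]
which is \eqref{quasifree}. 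Note that the positive constants $r>0$ and the role of $\psi$ (with $\int\psi=1$) enter only through the AM--GM-type balance that makes the $\beta$--$\gamma'$ and $\gamma$--$\beta'$ cross terms match the Cauchy--Schwarz inequality above.
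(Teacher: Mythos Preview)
Your proof is correct and follows essentially the same route as the paper: both arguments use the decomposition of $H$ and $\Delta$ via $P_\psi$, reduce to the quasifree inequality for $H_\mu$ on the subspace of densities with vanishing integral, and then handle the two rank-one remainder terms by an elementary quadratic inequality. The only cosmetic difference is that the paper packages the final step as ``a sum of two positive semidefinite $2\times 2$ matrices is positive semidefinite,'' whereas you spell it out as a Cauchy--Schwarz bound in $\R^3$.
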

\begin{proof}
It suffices to show that for real valued test densities $f$ and $g$ the matrix
\be
A=\left(
\begin{array}{cc}
\langle f,Hf\rangle & \langle f,\frac12\Delta g\rangle\\
\langle f,\frac12\Delta g\rangle & \langle g,Hg\rangle 
\end{array}\right)
\ee
is positive semidefinite.

From the fomulas above, we see that $A$ can be written as a sum of two matrices, where the first matrix  is obtained by replacing $H$ by 
$P_{\psi}^TH_{\mu}P_{\psi}$ and $\Delta$ by $P_{\psi}^T\Delta P_{\psi}$. This matrix is positive semidefinite, since $H_{\mu}$ satisfies the condition \eqref{quasifree} and $\langle f, H_\mu f\rangle \geq 0$ for test densities with vanishing integral. The second matrix has nonnegative trace and determinant and is therefore also positive semidefinite.  Hence $A$ as a sum of two positive semidefinite matrices is positive semidefinite.
\end{proof}

\subsection{Derezinski-Meissner-representation}

Inspired by a construction known from string theory, in~\cite{DM06} Derezinski and Meissner construct a representation of the free field on the Hilbert space
\be
\mathcal{H}=\mathcal{H}_0\otimes L^2(\mathbb{R})
\ee
where $\mathcal{H}_0$ is the usual Fock space for field derivatives and where $L^2(\mathbb{R})$ is a Hilbert space that describes the missing degree of freedom. On this Hilbert space, the field $\Phi$ is represented  by 
\be\label{eq:RepDM}
\Ppsi(f)\equiv\pi_{\psi}(\Phi(f))=\phic(P_{\psi}f)\otimes 1+1\otimes  q\langle1,f\rangle-1\otimes p\langle f,\Delta \psi\rangle\ .
\ee    
Here, $f$ is a test density, $\psi$ is a test density with total integral 1 as in Schubert's construction above, $P_\psi$ is the corresponding projection operator \eqref{eq:DefPpsi}, and $\phic$ is the canonical free massless field on Fock space, restricted to test densities with vanishing total integral, and  $q$ and $p$ are the standard position and momentum operators in the Schr\"odinger representation,
\begin{align*}
p\,\xi(k)&=k\xi(k)\,,\\
q\,\xi(k)&=i\hbar \frac{\partial}{\partial k}\xi(k)\,,
\end{align*}
where $\xi\in L^2(\R)$.

We briefly recall that \eqref{eq:RepDM} is indeed a representation of the free massless field. 
First of all, the field equation $\square\Ppsi(f)=\Ppsi(\square f)=0$ is satisfied,
since $\phic(\square f)=0$, $\text{\small $\int$} \square f=0$ and $\Delta\square f=0$. For the commutator we find
\begin{align*}
\frac{1}{i\hbar }[\Ppsi(f),\Ppsi(g)]=&\frac{1}{i\hbar }[\phic(P_{\psi}f),\phic(P_{\psi}g)]\otimes 1+\frac{1}{i\hbar}\otimes [q,p](-\langle1,f\rangle\langle g,\Delta \psi\rangle+\langle1,g\rangle\langle f, \Delta \psi\rangle)\\
=&\langle f,\Delta g\rangle-\langle1, f\rangle\langle\psi,\Delta g\rangle-\langle f,\Delta\psi\rangle\langle1,g\rangle
+\langle1,f\rangle\langle\psi,\Delta\psi\rangle\langle1,g\rangle\\
&-\langle1, f\rangle\langle g,\Delta \psi\rangle+\langle1,g\rangle\langle f,\Delta \psi\rangle\,,
\end{align*}
hence by the antisymmetry of $\Delta$ it follows that

\[
[\Ppsi(f),\Ppsi(g)]=i\hbar \langle f,\Delta g\rangle\ .
\]

Observe that  we have  changed the notation compared to~\cite{DM06}. Our $q$ corresponds to the $p$ of~\cite{DM06}, and our $p$ to their $-\chi$. Moreover, the functions $\sigma_{R,L}$ of DM are obtained from Schubert's $\psi$ in terms of their Fourier transforms by
\be\hat{\sigma}_R(k)=\int e^{i(k,k)x}\psi(x)\ ,\ \hat{\sigma}_L(k)=\int e^{i(k,-k)x}\psi(x)\ .\ee

To see the connection with Schubert's states we consider
\be\label{eq:defSchubertVectorState}
\Omega=\Omega_0\otimes \Omega_r\in\mathcal H\ee
where $\Omega_r$ is the ground state vector of the harmonic oscillator of mass $m$ and with frequency $\omega=\frac 1{mr^2}$ and $\Omega_0$ is the Fock vacuum. 
A short calculation shows that the resulting  2-point function is the one which Schubert constructed, 
\be\langle\Omega,\Ppsi(x)\Ppsi(y)\Omega\rangle=H(x,y)\ .\ee

To interpret the operators $p$ and $q$ in terms of the field, we first  observe that
\be\Ppsi(\psi)=1\otimes q\ .\ee
$p$ can be interpreted as the charge of the current $\partial_{\mu}\Ppsi$, which is conserved due to the field equation. More precisely, we prove the following lemma:

\begin{lemma}\label{lem:Charge}
Let $\chi^0,\chi^1$ be compactly supported test functions satisfying $\int\chi^0(t)dt=1$ and $\chi^1(\x)=1$ for all $\x$ with $|\x|<1$, and for $\lambda>0$, set $\chi_{\lambda}(t,\x) =-\lambda^2\dot{\chi^0}(\lambda t)\chi^1(\lambda^2 \x)\,dtd\x$.

Then	
\be 
Q_{\lambda}=\text{\small $\int$} \dot{\Phi}_\psi(t,\x)\lambda\chi^0(\lambda t)\chi^1(\lambda^2 \x)
\,dtd\x \equiv\Ppsi(\chi_{\lambda}) \, ,\ee
approximates the charge associated to $\partial_{\mu}\Ppsi$, and 
$1\otimes p$ is the total charge, in the sense that for all $\alpha\in\mathbb{R}$
	\be\mathrm{s}-\lim_{\lambda\to 0} e^{i\alpha Q_{\lambda}}=1\otimes e^{i\alpha p}\ .\ee

\end{lemma}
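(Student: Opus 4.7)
My plan proceeds in three phases. First, I reduce $Q_\lambda$ to $\Phi_\psi(\chi_\lambda)$ by integration by parts in $t$, which gives no boundary terms since $\chi^0$ is compactly supported. Substituting into the Derezinski--Meissner formula \eqref{eq:RepDM},
\[
\Phi_\psi(\chi_\lambda) = \varphi(P_\psi\chi_\lambda)\otimes 1 + 1\otimes q\,\langle 1,\chi_\lambda\rangle - 1\otimes p\,\langle\chi_\lambda,\Delta\psi\rangle.
\]
The factorisation of $\chi_\lambda$ together with $\int\dot\chi^0=0$ (immediate from compact support) forces $\langle 1,\chi_\lambda\rangle = 0$, so the $q$-term drops and $P_\psi\chi_\lambda = \chi_\lambda$.

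Second, I would show $\langle\chi_\lambda,\Delta\psi\rangle\to -1$ as $\lambda\to 0$. Integration by parts in $t$ turns this into $\int \lambda\chi^0(\lambda t)\chi^1(\lambda^2\x)\,\partial_t\Delta\psi(t,\x)\,dt\,d\x$. Using \eqref{eq:Delta} and Fubini one finds $\int\Delta\psi(t,\x)\,d\x = -\int|t-s|\Psi(s)\,ds$ with $\Psi(s) := \int\psi(s,\y)\,d\y$, hence $\int\partial_t\Delta\psi(t,\x)\,d\x = -\int\mathrm{sign}(t-s)\Psi(s)\,ds$, which equals $-1$ whenever $t$ lies to the future of $\mathrm{supp}\,\psi$. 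Because finite propagation speed confines the spatial support of $\partial_t\Delta\psi(t,\cdot)$ to $\{|\x|\lesssim |t|\}$, for $|t|\le 1/\lambda$ it is contained in $\{\chi^1(\lambda^2\x)=1\} = \{|\x|<1/\lambda^2\}$; the spatial integration collapses, the substitution $u = \lambda t$ reduces the expression to $\int\chi^0(u)\,K(u/\lambda)\,du$ with $K(t) := \int\partial_t\Delta\psi(t,\x)\,d\x$, and dominated convergence (together with the natural convention that $\chi^0$ is supported in the positive time axis, corresponding to a measurement in the distant future) yields the limit $-1$. Therefore $\Phi_\psi(\chi_\lambda) = \varphi(\chi_\lambda)\otimes 1 + 1\otimes p$ for all sufficiently small $\lambda$, and the commutativity of the two summands (acting on different tensor factors) gives the factorisation
\[
e^{i\alpha Q_\lambda} = e^{i\alpha\varphi(\chi_\lambda)}\otimes e^{i\alpha p}.
\]

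The third and hardest step is to prove $e^{i\alpha\varphi(\chi_\lambda)}\to\mathbf 1$ strongly on $\mathcal H_0$. The natural strategy is to use the Weyl generating identity $\langle\Omega_0,e^{i\alpha\varphi(\chi_\lambda)}\Omega_0\rangle = \exp\bigl(-\tfrac{\alpha^2}{2}\langle\chi_\lambda,H_\mu\chi_\lambda\rangle\bigr)$ to reduce convergence on $\Omega_0$ to the decay $\langle\chi_\lambda,H_\mu\chi_\lambda\rangle\to 0$, and then to extend strong convergence to all of $\mathcal H_0$ by density of coherent states together with the uniform bound $\|e^{i\alpha\varphi(\chi_\lambda)}\|=1$. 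To establish this Sobolev-type estimate I would write $\chi_\lambda = -\partial_t g_\lambda$ with $g_\lambda(t,\x)=\lambda\chi^0(\lambda t)\chi^1(\lambda^2\x)$ and use that $H_\mu$ is a bi-solution of the wave equation, $\partial_t^2 H_\mu = \partial_\x^2 H_\mu$, to trade the time derivative for a spatial one; the resulting spatial derivative $\partial_\x g_\lambda = \lambda^3\chi^0(\lambda t)(\partial\chi^1)(\lambda^2\x)$ is much smaller pointwise than $\chi_\lambda$ itself, and the delicate interplay between the time scale $\lambda^{-1}$ and the larger spatial scale $\lambda^{-2}$ is exactly what has to be exploited to control $H_\mu$'s logarithmic singularity after the change of variables $\tilde k = k^1/\lambda^2$ on the light cone. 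This Sobolev estimate is, in my view, the main technical obstacle of the proof.
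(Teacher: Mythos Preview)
Your overall three–step structure matches the paper's proof exactly: drop the $q$-term because $\int\chi_\lambda=0$, compute $\langle\chi_\lambda,\Delta\psi\rangle$, and then show $e^{i\alpha\varphi(\chi_\lambda)}\to\mathbf 1$ strongly on the Fock factor. Two points deserve correction or simplification.

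\textbf{Step 2: a computational slip.} Your formula $\int\Delta\psi(t,\x)\,d\x=-\int|t-s|\,\Psi(s)\,ds$ is not right. From \eqref{eq:Delta} one has $\int\Delta(x,y)\,d\x^1=-(x^0-y^0)$ (not $-|x^0-y^0|$: the two step functions combine linearly, not to an absolute value), so $K(t)=\int\partial_t\Delta\psi(t,\x)\,d\x\equiv -1$ identically. Hence once the spatial cutoff $\chi^1(\lambda^2\cdot)$ equals $1$ on the support of $\partial_t\Delta\psi(t,\cdot)$ for $t\in\lambda^{-1}\mathrm{supp}\,\chi^0$, the integral is \emph{exactly} $-1$, not merely in the limit. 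In particular you do not need---and should not introduce---any ``natural convention'' that $\chi^0$ be supported on the positive time axis; the lemma as stated makes no such hypothesis, and with your erroneous $K(t)=-\int\mathrm{sign}(t-s)\Psi(s)\,ds$ the limit would fail for general $\chi^0$. The paper performs the same computation directly on the explicit kernel of $\Delta$, carrying out the $t$-integral first and then observing that $\chi^1$ is identically $1$ on the relevant region for small $\lambda$.

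\textbf{Step 3: unnecessarily hard.} You call this the ``main technical obstacle'' and propose a Sobolev-type argument trading time derivatives for spatial ones via the wave equation. The paper's route is far more elementary: since $\varphi$ is the standard Fock field (restricted to zero-integral test densities), one has $\|\varphi(\chi_\lambda)\Omega_0\|^2=\int\frac{dp}{|p|}\,|\widehat{\chi_\lambda}(|p|,p)|^2$, and the factorisation $\widehat{\chi_\lambda}(|p|,p)=\lambda^{-1}\,|p|\,\widehat{\chi^0}(\lambda^{-1}|p|)\cdot\lambda^{-2}\widehat{\chi^1}(\lambda^{-2}p)$ reduces the integral, after the substitution $p\mapsto\lambda^2 p$, to $\lambda^2\int|p|\,|\widehat{\chi^0}(\lambda|p|)|^2\,|\widehat{\chi^1}(p)|^2\,dp\to 0$ by dominated convergence. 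Strong continuity of Weyl operators in the Fock representation then gives $e^{i\alpha\varphi(\chi_\lambda)}\to\mathbf 1$ strongly. No bi-solution manipulations or delicate scale balancing are needed.
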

\begin{proof}

The idea to approximate charges in this manner goes back to Requardt \cite{Requardt}.

	Since $\int \chi_{\lambda}=0$ we have
	\be Q_{\lambda}=\phic(\chi_{\lambda})\otimes 1-1\otimes p \text{\small $\int$}  \chi_{\lambda}\left(\Delta\psi\right)\ .\ee
	First we show that for sufficiently small $\lambda>0$, $Q_{\lambda}=\phic(\chi_{\lambda})\otimes 1+1\otimes p$.
	Inserting the formulas for $\Delta$  and $\chi_{\lambda}$ into  the integral in the last term yields
	\be-\frac12\text{\small $\int$}  \lambda^2\dot{\chi^0}(\lambda t)\chi^1(\lambda^2 \x)\left(\Theta(t'-t-\x'+\x)-\Theta(t-t'-\x+\x')\right)\psi(t',\x')\, dtd\x dt'd\x'\ .\ee
	We perform the $t$-integral and obtain
	\be=-\frac12\text{\small $\int$} \lambda\left(\chi^0(\lambda(-t'-\x+\x'))+\chi^0(\lambda(t'+\x-\x'))\right)\chi^1(\lambda^2 \x)\psi(t',\x')\,  d\x dt'd \x'\ee
	and, after a suitable substitution,
	\be=-\frac12\text{\small $\int$}  \lambda\chi^0(\lambda s)\left(\chi^1(\lambda^2(-t'+\x'-s)+\chi^1(\lambda^2(\x'-t'+s))\right)\psi(t',\x')\,  dsdt'd\x'\ .\ee
	For $\lambda$ sufficiently small $\chi^1$ assumes the value 1 for all $t\in \lambda^{-1}\mathrm{supp}\chi^0$ and $(t',\x')\in\mathrm{supp}\psi$. Thus the integral is
	\be-\text{\small $\int$} ds\chi^0(s)\text{\small $\int$} dt'd\x'\psi(t',\x')=-1\ .\ee
	It remains to prove that $e^{i\alpha\phic(\chi_{\lambda})}$ converges strongly to $1$ for $\lambda\to 0$. This follows from the strong continuity of Weyl operators in the Fock representation, since $\phic(\chi_{\lambda})\Omega_0$, where $\Omega_0$ is the vacuum vector in the Fock space $\mathcal{H}_0$, converges to zero in the limit $\lambda\rightarrow 0$, as shown by the following computation where $\widehat{\phantom{f}}$ denotes the Fourier transform:
	\be\text{\small $\int$} \frac{dp}{|p|}|\widehat{\chi_{\lambda}}(|p|,p)|^2=\text{\small $\int$} \frac{dp}{|p|}|\lambda p\widehat{\chi^0}(\lambda^{-1}|p|)|^2|\frac{1}{\lambda^2}\widehat{\chi^1}(\frac{p}{\lambda^2})|^2=\lambda^2\text{\small $\int$} dp|p|\widehat{\chi^0}(\lambda |p|)|^2|\widehat{\chi^1}(p)|^2\to0\ee 
	in the limit $\lambda\to0$,  namely $\widehat{\chi^0}(\lambda |p|)$ tends to 1 for all $p$, and the integrand is bounded  by an integrable function which does not depend on $\lambda$.
\end{proof}

\begin{prop}The DM representation is irreducible. 
\end{prop}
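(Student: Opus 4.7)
The plan is to show that any bounded $T\in\mathcal{B}(\mathcal{H})$ commuting with all Weyl operators $e^{i\Ppsi(f)}$ (for real $f\in\cDd(\M)$) must be a scalar multiple of the identity. The argument proceeds by first extracting the canonical Schr\"odinger variables on the $L^2(\R)$ factor from the representation, then using Stone--von Neumann irreducibility to reduce $T$ to the form $T_0\otimes 1$, and finally exploiting the tensor-factor decomposition of $\Ppsi(f)$ together with irreducibility of the Fock representation of $\phic$ to conclude that $T_0$ is a scalar.

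For $f=\psi$ one has $P_{\psi}\psi=0$, $\langle 1,\psi\rangle=1$, and $\langle\psi,\Delta\psi\rangle=0$ (the last by antisymmetry of $\Delta$), giving $\Ppsi(\psi)=1\otimes q$. Thus $1\otimes e^{i\alpha q}$ lies in the generated algebra for every $\alpha\in\R$. Lemma~\ref{lem:Charge} supplies $1\otimes e^{i\alpha p}$ as the strong limit of $e^{i\alpha\Ppsi(\chi_{\lambda})}$, hence in its strong closure. It follows that $T$ commutes with $1\otimes e^{i\alpha q}$ and $1\otimes e^{i\alpha p}$ for all $\alpha\in\R$; since by Stone--von Neumann these together generate $\mathcal{B}(L^2(\R))$, one obtains $T=T_0\otimes 1$ for some $T_0\in\mathcal{B}(\mathcal{H}_0)$.

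Next, the summands $\phic(P_{\psi}f)\otimes 1$ and $1\otimes(q\langle 1,f\rangle-p\langle f,\Delta\psi\rangle)$ of $\Ppsi(f)$ act on distinct tensor factors and so commute strongly, whence
\[
e^{i\Ppsi(f)}=e^{i\phic(P_{\psi}f)}\otimes e^{i(q\langle 1,f\rangle-p\langle f,\Delta\psi\rangle)}.
\]
The relation $[T_0\otimes 1,e^{i\Ppsi(f)}]=0$ then forces $[T_0,e^{i\phic(P_{\psi}f)}]=0$ for every real $f\in\cDd(\M)$. As $f$ ranges over $\cDd(\M)$, the image $P_{\psi}f$ exhausts all real test densities of vanishing integral (since $P_{\psi}$ fixes such densities). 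On this space $\phic$ is, by construction, the canonical Fock-space representation of the CCR, which is irreducible. Hence $T_0=c\cdot 1$ and $T=c\cdot 1$, proving irreducibility of the DM representation.

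The main obstacle is the last irreducibility claim: it rests on showing that real test densities of vanishing integral are dense in the one-particle Hilbert space built from the positive semidefinite form underlying the Fock representation of $\phic$, from which the standard theory of Weyl algebras over a real Hilbert space yields irreducibility. A minor secondary subtlety is the passage from the unbounded $\Ppsi(f)$'s to the bounded Weyl operators $e^{i\Ppsi(f)}$ in the commutant argument; this is handled by essential self-adjointness of $\Ppsi(f)$ on a common invariant domain.
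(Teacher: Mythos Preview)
Your argument is correct and relies on the same ingredients as the paper's proof: the identification $\Ppsi(\psi)=1\otimes q$, Lemma~\ref{lem:Charge} to recover $1\otimes e^{i\alpha p}$ as a strong limit, irreducibility of the Schr\"odinger representation, and irreducibility of the Fock representation of $\phic$ on densities of vanishing integral. The organization differs: you work from the commutant side, first reducing $T$ to $T_0\otimes 1$ via Stone--von Neumann and then killing $T_0$ via Fock irreducibility, whereas the paper works from the bicommutant side, exhibiting every tensor Weyl operator $W(f;\alpha,\beta)=e^{i\phic(f)}\otimes e^{i\alpha q+i\beta p}$ explicitly as a strong limit of represented Weyl operators $\pi_\psi\bigl(e^{i\Phi(f+\alpha\psi+(\beta+\int f\Delta\psi)\chi_\lambda)}\bigr)$ and invoking that these $W(f;\alpha,\beta)$ already form an irreducible set. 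The paper's route is slightly more compact (one explicit formula does the work), while yours makes the tensor structure of the commutant more transparent; neither gains any real generality over the other. Your flagged ``obstacle'' about density of real test densities of vanishing integral in the one-particle space is not a genuine gap: by construction $\mathcal{H}_0$ is the Fock space over precisely that one-particle space, so the range is dense by definition and the standard irreducibility of Fock representations applies.
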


\begin{proof}
We use the fact that $\mc B(\mc H)$ is the weak closure of the algebra generated by the Weyl operators, 
\be
W(f;\alpha,\beta):=e^{i\phic(f)}\otimes e^{i\alpha q+i\beta p}
\ee 
with test densities $f$ with $\int f=0$ and $\alpha,\beta\in\R$. This means that they form an irreducible set of operators on $\mathcal{H}$, i.e. they have trivial commutant in $\mc B(\mc H)$. Hence, the claim follows once we show that these operators are in the weak closure of the image of $\pi_{\psi}$. To see that this is true, choose a compactly supported test density  $\chi_\lambda$ as above in Lemma~\ref{lem:Charge} such that 
for sufficiently small $\lambda>0$,  we have $\int \chi_\lambda\Delta\psi=-1$. Then
\be
\pi_{\psi}(e^{i\Phi(f+\alpha\psi+(\beta+\int f\Delta\psi)\chi_{\lambda})})=
W(f+(\beta+\int f\Delta\psi)\chi_{\lambda};\alpha,\beta)\ ,
\ee
and by the strong continuity of Weyl operators in the Fock representation the statement follows. 
\end{proof}

It follows in particular that the states given by \eqref{eq:defSchubertVectorState} are pure.\medskip

Regarding the dependence of the representation on the choice of $\psi$, we reproduce the result of~\cite{DM06} in our notation.

\begin{lemma}
Let $\psi, \tilde{\psi}$ denote  test densities with total integral 1. Then
	\be\mathrm{Ad}(V_{\tilde{\psi},\psi})e^{i\Phi_{\tilde \psi}(f)}=e^{i\Ppsi(f)} \qquad \forall f\ee
	with
	\be V_{\tilde{\psi},\psi}=e^{i\phic(\xi) p-\frac{i}{2}p^2\text{\small $\int$}\tilde{\psi}\Delta\psi}\ ,\ \xi=\tilde{\psi}-\psi\ .\ee
\end{lemma}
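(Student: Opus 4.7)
The plan is to reduce the identity of exponentials to the linear intertwining relation
\[
V_{\tilde\psi,\psi}\,\Phi_{\tilde\psi}(f)\,V_{\tilde\psi,\psi}^{-1}=\Ppsi(f)\,,
\]
at which point the statement follows by the functional calculus for self-adjoint operators, using that $V_{\tilde\psi,\psi}$ is unitary. Essential self-adjointness of the exponent $A:=\phic(\xi)p-\tfrac{c}{2}p^{2}$, with $c:=\int\tilde\psi\,\Delta\psi$, is standard on the invariant dense domain spanned by algebraic tensor products of finite-particle Fock vectors with Schwartz functions of $k$.

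To compute the adjoint action I would apply $e^{iA}Xe^{-iA}=\sum_{n\ge 0}\tfrac{1}{n!}(\mathrm{ad}_{iA})^{n}X$ separately to the three generators $\phic(h)\otimes 1$, $1\otimes q$, $1\otimes p$ appearing in the definition \eqref{eq:RepDM}. A helpful structural point is that the two summands of $A$ mutually commute and both lie in the abelian algebra generated by $\phic(\xi)$ and $p$; consequently every iterated commutator with $iA$ vanishes after the first step. Setting $\hbar=1$ in line with this section, a short calculation yields
\[
[iA,p]=0,\qquad [iA,q]=\phic(\xi)-cp,\qquad [iA,\phic(h)]=-\langle\xi,\Delta h\rangle\,p,
\]
and hence
\[
V_{\tilde\psi,\psi}\,p\,V_{\tilde\psi,\psi}^{-1}=p,\quad
V_{\tilde\psi,\psi}\,q\,V_{\tilde\psi,\psi}^{-1}=q+\phic(\xi)-cp,\quad
V_{\tilde\psi,\psi}\,\phic(h)\,V_{\tilde\psi,\psi}^{-1}=\phic(h)-\langle\xi,\Delta h\rangle p.
\]

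Plugging these into the decomposition $\Phi_{\tilde\psi}(f)=\phic(P_{\tilde\psi}f)+\langle 1,f\rangle q-\langle f,\Delta\tilde\psi\rangle p$ and collecting terms, the $\phic$-piece reorganises via the identity $P_{\tilde\psi}f+\langle 1,f\rangle\xi=P_{\psi}f$, a direct consequence of $\xi=\tilde\psi-\psi$ and $P_{\tilde\psi}-P_{\psi}=|\xi\rangle\langle 1|$. The $q$-coefficient is already correct. The real checkpoint is the coefficient of $p$: one needs
\[
-\langle\xi,\Delta P_{\tilde\psi}f\rangle-c\langle 1,f\rangle-\langle f,\Delta\tilde\psi\rangle\;=\;-\langle f,\Delta\psi\rangle.
\]
This falls out after splitting $\langle\xi,\Delta P_{\tilde\psi}f\rangle=\langle\xi,\Delta f\rangle-\langle 1,f\rangle\langle\xi,\Delta\tilde\psi\rangle$, using antisymmetry of $\Delta$, and invoking the key numerical identity $\langle\xi,\Delta\tilde\psi\rangle=-\langle\psi,\Delta\tilde\psi\rangle=\langle\tilde\psi,\Delta\psi\rangle=c$: the $c\langle 1,f\rangle$ and $\langle f,\Delta\tilde\psi\rangle$ contributions cancel in pairs and only $-\langle f,\Delta\psi\rangle$ survives. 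This cancellation is the raison d'\^etre of the quadratic counterterm $-\tfrac{c}{2}p^{2}$ in the exponent of $V_{\tilde\psi,\psi}$; without it a residual central term $c\langle 1,f\rangle p$ would obstruct the intertwining, so one may read $-\tfrac{c}{2}p^{2}$ as the scalar coboundary trivialising the cocycle $(\psi,\tilde\psi)\mapsto c$.

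The only obstacle I foresee is the sign and antisymmetry bookkeeping in this last step; no analytic difficulty arises, since every BCH series truncates after at most two terms on the invariant domain above, and unitary conjugation commutes with the functional calculus. With the linear identity in hand, exponentiation gives $V_{\tilde\psi,\psi}e^{i\Phi_{\tilde\psi}(f)}V_{\tilde\psi,\psi}^{-1}=e^{i\Ppsi(f)}$, which is the assertion.
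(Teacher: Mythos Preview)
Your argument is correct and complete. The route, however, differs from the paper's: you work at the level of generators, proving the linear intertwining relation $V_{\tilde\psi,\psi}\Phi_{\tilde\psi}(f)V_{\tilde\psi,\psi}^{-1}=\Ppsi(f)$ via the (terminating) adjoint series and then exponentiating, whereas the paper stays entirely at the level of bounded Weyl operators. It factorises $e^{i\Phi_{\tilde\psi}(f)}=e^{i\phic(P_{\tilde\psi}f)}e^{i(q\int f-p\int f\Delta\tilde\psi)}$ and conjugates each factor separately using the Weyl relations, never touching unbounded operators or domains. Your approach is more transparent about the algebraic mechanism---in particular your remark that the $-\tfrac{c}{2}p^{2}$ term trivialises the cocycle is a nice observation that the paper does not make explicit---but it obliges you to invoke essential self-adjointness of $A$ and the validity of the commutator expansion, which you handle adequately but which the paper's route sidesteps altogether. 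One small slip: you write $P_{\tilde\psi}-P_{\psi}=|\xi\rangle\langle 1|$, but in fact $P_{\tilde\psi}-P_{\psi}=-|\xi\rangle\langle 1|$; fortunately the identity $P_{\tilde\psi}f+\langle 1,f\rangle\xi=P_{\psi}f$ you actually use is correct, so the error is cosmetic.
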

\begin{proof}
To simplify notation, we use $p,q, \phic$ instead of $1 \otimes p$, $1 \otimes q$, $\phic\otimes 1$ for the operators on $\mc H$.
	We use the factorization
	\be e^{i\Phi_{\tilde \psi}(f)}=e^{i\phic(f-\tilde{\psi}\text{\small $\int$}f)} e^{i(q\text{\small $\int$}f-p\text{\small $\int$}f\Delta\tilde{\psi})}\ee
	and compute
	\be\mathrm{Ad}(e^{i(\phic(\xi) p)})\left(e^{i\phic(f-\tilde{\psi}\text{\small $\int$}f)}\right)=\left(e^{i\phic(f-\tilde{\psi}\text{\small $\int$}f)- ip\text{\small $\int$} \xi\Delta (f-\tilde{\psi}\text{\small $\int$f)}}\right)\ ,\ee
	\be\mathrm{Ad}(e^{i(\phic(\xi) p)})\left(e^{i(q\text{\small $\int$}f-p\text{\small $\int$}f\Delta\tilde{\psi})}\right)=e^{i( q\text{\small $\int$}f-p\text{\small $\int$}f\Delta\tilde{\psi})+i\phic(\xi)\text{\small $\int$}f}\ee
	and 
	\be\mathrm{Ad}(e^{i\lambda p^2})(e^{i(q\text{\small $\int$}f-p\text{\small $\int$}f\Delta\tilde{\psi}})=e^{i(q\text{\small $\int$}f-p\text{\small $\int$}f\Delta\tilde{\psi}-2p\lambda\text{\small $\int$}f)}\ee
	The proposition follows by combining these formulas.
\end{proof}

We now analyze the symmetries of the theory. Clearly, the net of local algebras transforms covariantly under the the conformal group
\be G=\mathrm{Diff}^+(\mathbb{R})\times\mathrm{Diff}^+(\mathbb{R})\ .\ee
We want to see whether the corresponding automorphisms can be unitarily implemented.

Let $\chi_\pm\in\mathrm{Diff}^+(\mathbb{R})$ and let $\chi$ be the corresponding conformal transformation,
\be\chi(x)=\chi(t,\x)=(\frac12(\chi_+(t+\x)+\chi_-(t-\x)),\frac12(\chi_+(t+\x)-\chi_-(t-\x)))\ .\ee
The corresponding automorphism $\alpha_\chi$ acts on the field $\phi$ by
$\alpha_{\chi}(\phi(x))=\phi(\chi(x))$. In the DM representation with respect to $\psi$ we find
\be\Ppsi(\chi(x))=\phic(\delta_{\chi(x)}-\psi)+q-p\Delta\psi(\chi(x))\ee
Let $G_0$ be the subgroup of compactly supported conformal transformations, and let $U$ be the projective representation of $G_0$ on $\mathcal{H}_0$ which implements the conformal transformations on the derivative of the field \cite{BMT}.
We find
\be\mathrm{Ad}(U(\chi))(\Ppsi(x))=\phic(\delta_{\chi(x)}-\chi_*\psi)+q-p\Delta\psi(x)\ee
with the push forward $\chi_*$ defined as the pull back of the inverse,
\be{\chi}_* =(\chi^{-1})^*\ .\ee
We have $\Delta\psi(x)=(\Delta{\chi}_*\psi)(\chi(x))$, hence
\be\alpha_{\chi}=\mathrm{Ad}(V_{{\chi}_*\psi,\psi}U(\chi))\ .\ee
We now check that $\chi\mapsto V_{\chi_*\psi,\psi}$ is a cocycle on the conformal group $G_0$ with respect to $U$.
Let $\alpha^0_\chi=\mathrm{Ad}(U(\chi))$ and $\chi_1,\chi_2\in G$. Then
\be V_{{\chi_1}_*\psi,\psi}\alpha^0_{\chi_1}(V_{{\chi_2}_*\psi,\psi})=V_{{\chi_1}_*\psi,\psi}V_{{{\chi_{1}}_*(\chi_2}_*\psi),{\chi_1}_*\psi}=V_{(\chi_1\chi_2)_*\psi,\psi}\ee
We conclude that $\chi\mapsto V_{{\chi}_*\psi,\psi}U(\chi)\equiv U_{\psi}(\chi)$ is a projective representation of $G_0$ with the same central charge as $U$. This representation, however, is not irreducible since $p$ is in the commutant. Actually, it is a direct integral over the spectrum of $p$ of irreducible representations labeled by the charge $Q=p$.

\subsection{Local normality}
We now show that the DM representation, considered as a representation of the exponentiated CCR algebra (the Weyl algebra) of time zero fields, is locally normal with respect to the representations induced by vacuum states for the massive situation. This means that the restrictions of these representaions to fields in a compact region are quasiequivalent. We recall that two representations $\pi$ and $\pi'$ of some C*-algebra $\mathfrak{A}$ are quasiequivalent if $\pi(A)\to\pi'(A)$, $A\in\mathfrak{A}$ extends to an isomorphism of the weak closures. 

While local normality holds for different nonzero masses as shown by Fr\"ohlich and Eckmann \cite{EF74}, this is not the case for the vacuum representation in the massless case. The local normality of the DM representation with respect to the massive ones now indicates that it is possible to construct the local observable algebras of massive models (such as presumably the sine-Gordon model) in this representation. In older approaches to the construction of the sine-Gordon model, one had to introduce a volume cutoff or an auxiliary mass term to avoid the infrared singularities of the vacuum representation of the massless field and therefore lost control over the local von Neumann algebras.

The DM representation is induced by the Schubert state, hence a state that is  quasifree. Since this is also true for the vacuum state in the massive case, we can use the (necessary and sufficient)  Araki-Yamagami criterion  for quasiequivalence of representations that are induced by quasifree states \cite{ArakiY}. 

For this purpose we consider the direct sum $L(\mathbb{R})=\mathcal{D}(\mathbb{R},\C)\oplus\mathcal{D}(\mathbb{R},\C)$ of two spaces of smooth compactly supported complex-valued test functions endowed with complex conjugation as the antilinear involution
and with the hermitian
form
\be\gamma(f,g)=\int (\overline{f_1}g_2-\overline{f_2}g_1)\, d\mathbf x \ .\ee 
The restriction of $\gamma$ to the real subspace is the symplectic form known from other formulations of the CCR algebra. According to \cite{ArakiY}, local quasiequivalence of the representations induced by the vacuum (of mass $m$) and the Schubert state is equivalent to 
\begin{enumerate}
\item the symmetrized scalar products $\langle , \rangle_{m,sym}$  and $\langle , \rangle_{s,sym}$, induced by the massive vacuum of mass $m>0$ and by the Schubert state, respectively, induce the same topology on $L(I)$, for any compact interval $I\subset \R$, and 
\item the square roots of the operators that define the respective 2-point functions in terms of e.g. $\langle , \rangle_{m,sym}$ (by the Riesz representation theorem) on $L(I)$ differ by a Hilbert Schmidt operator.
\end{enumerate}

We first calculate the scalar products. The detailed argument, together with all the relevant conventions, is presented in the appendix. 

The vacuum state of the massive theory induces the positive definite scalar product 
\[
\langle f,g\rangle_m=\tsf 1 2 \int \left(\omega^{-1}\overline{\hat{f_1}}\tilde{g_1}+\omega\overline{\hat{f_2}}\tilde{g_2}\right)\,  d\mathbf k +\tsf i 2 \gamma(f,g)\ ,
\]
where $\omega=\omega(\mathbf k)=\sqrt{\mathbf k^2 +m^2}$, the frequency on the positive mass shell.
The Schubert state 
for the massless scalar field induces the scalar product\beqa
\langle f,g\rangle_{s}&=&
\tsf{1}{2}\int \left(|\mathbf k|^{-\frac12}\overline{\widehat{P_{\psi}f_1}}-i|\mathbf k|^{\frac12}\overline{\hat{f_2}}
\right)\left(|\mathbf k|^{-\frac12}\widehat{P_{\psi}g_1}+i|\mathbf k|^{\frac12}\hat{g_2}\right)\, d\mathbf k \\
&& + \ 
\tsf 1 2\int (r \overline{f_1}-\tsf{i}{r} \psi  \overline{f_2})d\mathbf x\int  (rg_1+\tsf{i}{r}\psi g_2)d\mathbf y
\eeqa
with $P_{\psi}(h)=h -\psi \int h$ for $h\in \in\mathcal{D}(\mathbb{R})$, analogously to~\eqref{eq:DefPpsi}, with $\psi$ replaced by a real testfunction on $\mathbb{R}$ (not $\R^2$) with total integral 1. This is readily calculated in the usual way from  the time zero field and momentum  (see appendix).

Note that indeed, the integrand in $k$ in the formula for $\langle , \rangle_s$ is not singular in $\mathbf k= 0$ since
for any $h \in \mathcal D(\R)$, there is a constant $C  \geq 0$ which depends on $\psi$ and the support of $P_\psi h$, such that $|\widehat{P_{\psi}h}(\mathbf k)| \leq  C \ |\mathbf k| \  \|h\|_{L^1(\R)}$, 
\beqan\label{eq:k0estimatePpsi}
|\widehat{P_{\psi}h}(\mathbf k)| &=
&|\int  e^{i\mathbf k\mathbf x} \ P_\psi h(\mathbf x) 
\,d\mathbf x|
\ = \ |\int  (e^{i\mathbf k\mathbf x}-1) \ P_\psi h (\mathbf x)
d\mathbf x |  \nonumber \\
&\le&  |\mathbf k| \left(\sup_{\mathbf x\in \supp P_\psi h} |\mathbf x|\right) 
\int |P_\psi h (\mathbf x)|d\mathbf x 
\le  
  C \  |\mathbf k| \   \|h\|_{L^1(\R)}
\eeqan
where $C= \left(\sup_{\mathbf x\in \supp P_\psi h} |\mathbf x|\right)  (1+ \|\psi\|_{L^1(\R)})$. Observe that in the first line we subtracted  a term that is 0 and in last step we used the triangle inequality.

Like the massive one,  this scalar product is positive definite, since
$\langle f,f\rangle_{s}=0$ implies $f_2=0$, $P_\psi f_1=0$ i.e. $f_1=(\int f_1 d\mathbf x)\psi$, and $\int f_1 d \mathbf x=0$. 

The symmetrized scalar products then are
\[
\langle f,g\rangle_{j,sym}=\tsf12(\langle f,g\rangle_j+\langle \bar g,\bar f\rangle_j) 
\quad \mbox{ for } j=m,s
\] 
such that
\[
\langle f,g\rangle_j=\langle f,g\rangle_{j,sym}+\tsf i 2 \gamma(f,g) \quad \mbox{ for } j=m,s\ .
\]
Explicitly, we thus have
\be\label{eq:massivSym}
\langle f,f\rangle_{m,sym}=\tsf{1}{2}\int \left(\omega^{-1}|\hat{f_1}|^2+\omega|\hat{f_2}|^2\right)\ d \mathbf  k
\ee 
and 
\be\label{eq:SchubSym}
\langle f,f\rangle_{s,sym}=\tsf 1 2\int \left(|\mathbf k|^{-1}|\widehat{P_{\psi}f_1}|^2+|\mathbf k|\hat{f_2}|^2\right)\, d\mathbf k +\frac{r^2}2\,\big|\!\!\int  f_1 d\mathbf x\, \big |^2+\frac{1}{2r^2}\,\big|\!\!\int  \psi f_2d\mathbf  x\,\big|^2 \ .
\ee

{\lemma $\langle ,\rangle_{m,sym}$ and $\langle ,\rangle_{s,sym}$ induce the same topology on $L(I)$ for any compact interval $I \subset \R$.}

\begin{proof}
Without loss of generality, we assume $\mathrm{supp}\psi\subset I$ for the density defining the representation.

We write 
\[
\langle f,g\rangle_{m,sym}=\langle f_1,g_1\rangle_1+\langle f_2,g_2\rangle_2
\]
with
\[
\langle a,b\rangle_{1}=\tsf 1 2 \int  \overline{\hat{a}(\mathbf k)}\omega(\mathbf k)^{-1}\hat{b}(\mathbf k) \, d\mathbf k
\]
and 
\[
\langle a,b\rangle_{2}=\tsf 1 2
\int \overline{\hat{a}(\mathbf k)}\omega(\mathbf k)\hat{b}(\mathbf k)\, d\mathbf k \ .
\]
and denote the correponding norms $\| \ \|_1$ and $\| \ \|_2$.

Let  $\chi$ be a testfunction with $\chi(x)=1$ for $x\in I$.
Then the scalar product $\langle\cdot,\cdot\rangle_{s,sym}$ on $L(I)$ can be written in terms these products as
\be\langle f,g\rangle_{s,sym}=\langle f_1,Ag_1\rangle_{1}+\langle f_2,Bg_2\rangle_2\ee
with
\be \label{eq:defA}
 A=P_{\psi}^T M_{\frac{\omega}{|\mathbf k|}}P_{\psi}+r^2|M_\omega\chi\rangle\langle M_\omega \chi|\ee
and 
\be \label{eq:defB}
B=M_{\frac{|\mathbf k|}{\omega}}+\frac{1}{r^2}|M_{\frac 1 \omega}\psi\rangle\langle M_{\frac 1 \omega}\psi| \ee
where the bra-ket notation 
refers to the respective scalar products, and where $M_\omega$, $M_{\frac{\omega}{|\mathbf  k|}}$ etc.  stands for multiplication with $\omega$, ${\frac{\omega}{|\mathbf  k|}}$ etc. in momentum space, i.e. $\widehat{M_\omega \chi}(\mathbf k) = \omega \widehat{\chi}(\mathbf k)$.

We have to show that $A$ and $B$ are bounded and invertible.
\begin{description}
\item[$A$ is bounded]
$\chi$ is smooth, hence its Fourier transform is quickly decreasing and thus $||M_\omega \chi||^2_1=\int  \omega(\mathbf k)|\hat{\chi}(\mathbf k)|^2\,d \mathbf k<\infty$. Therefore, the rank one operator $|M_\omega \chi\rangle\langle M_\omega \chi|$ is bounded. Regarding the first summand in $A$, first observe that $P_{\psi}$ is bounded, and $\frac{\omega}{|\mathbf k|}$ is bounded for $|\mathbf k|\ge m$. For $|\mathbf k|<m$, observe that for any $h \in \mc D(I)$, we have the estimate \eqref{eq:k0estimatePpsi}, i.e. $|\widehat{P_{\psi}h}(\mathbf k)| \leq  C \ |\mathbf k| \  \|h\|_{L^1(\R)}$ for some $C\geq 0$. The claim follows when we have shown that 
the $L^1$-norm of $h$ is bounded by a multiple of $\|h\|_1$. To see this, we use that $\chi \equiv 1$ on $I$, and conclude\beqa
\int |h(\mathbf x)| d\mathbf x &=& \int |h(\mathbf x)| \chi(\mathbf x) d\mathbf x = \langle  \widehat{|h|} , \hat \chi \rangle_{L^2} = 
\langle  \sqrt{\omega}^{-1}{\widehat{|h|}}, \sqrt{\omega}^{-1}{\omega \hat \chi} \rangle_{L^2}\\
&=& 
2 \langle  |h| , M_\omega  \chi \rangle_{1} 
\leq 2 \|h\|_1  \| M_\omega \chi\|_1 
\eeqa
by Plancherel and by the Cauchy-Schwarz inequality.

\item[$A\ge c1$ for some $c>0$] 
Since $\omega(k)>|k|$ we have 
\[ A\ge P_{\psi}^TP_{\psi}+r^2|M_\omega \chi\rangle\langle M_\omega\chi|\ .\]
Furthermore,
\[||h||_1=||P_{\psi}h+2\langle M_\omega \chi,h\rangle_1\psi||_1\le||P_{\psi} h||_1+2|\langle M_\omega\chi,h\rangle_1|||\psi||_1\]
for any $h \in \mathcal D(I)$.
Hence
\[
||h||_1^2\le\mathrm{max}(2,\frac{2||\psi||_1^2}{r^2})(||P_{\psi}h||^2_1+2r^2|\langle M_\omega \chi,h\rangle_1|^2)\le\mathrm{const}\langle h,Ah\rangle_1\ .
\]

\item[$B$ is bounded]
This is obvious since $|\mathbf k|<\omega(\mathbf k)$ and $||M_{\frac 1 \omega} \psi||_1<\infty$. 
\item [$B\ge c1$ for some $c>0$] We show that on $L^2(I)$ the inequality
\[M_{|\mathbf k|}>c\, M_\omega\]
holds for some $c>0$. For this purpose we consider the operator 
\[ H=\mathbf x^2+ M_{|\mathbf k|}
\]
on $L^2(\mathbb{R})$. In momentum space (where $x=i\hbar \partial_k)$, this is the Schr\"odinger operator for a 1d particle in a potential $|\mathbf k|$. Its ground state energy $a$ is the first zero of the derivative of the Airy function, multiplied by $-1$.
Arguing as in \cite{BF02} we conclude that for $h\in L^2(\mathbb{R})$ with $||h||=1$ the inequality
\[
\langle h,M_{|\mathbf k|} h\rangle\ge \frac{\mathrm{const}}{\sqrt{\langle h,(\mathbf x-a)^2h\rangle}}
\]
hold. For $h\in L^2(I)$ we obtain the estimate
\[
\langle h,M_{|\mathbf k|}h\rangle\ge\frac{2\mathrm{const}}{|I|} \ .
\]
But then there exists another constant such that
\[
M_{|\mathbf k|}\ge \mathrm{const}M_{(|\mathbf k|+m)}\ge\mathrm{const}M_{\omega} \ .
\]
\end{description}
\end{proof}

{\lemma 
Denote by $C_m$ and $C_s$ the operators defining the 2-point functions given by a massive vacuum state and the Schubert state, respectively, with respect to the scalar product $\langle,\cdot,\cdot\rangle_{m,sym}$, i.e.
\[ \langle f,g\rangle_m=\langle f,C_m g\rangle_{m,sym}
\]
 and
 \[
\langle f,g\rangle_{s}=\langle f, C_s g\rangle_{m,sym}
\ .
\]
The square roots of these operators differ by a Hilbert Schmidt operator.}

\begin{proof} 
We first observe that 
\[
 C_{m}=
\left(\begin{array}{cc} 1& iM_\omega\\
                                       -iM_{\omega^{-1}} & 1
         \end{array}\right)  \] 
and
\[
 C_s=
 \left(\begin{array}{cc} A& iM_\omega\\
                                       -iM_{\omega^{-1}} & B
           \end{array}\right)    \ .
\]
with $A$ and $B$ the bounded operators \eqref{eq:defA} and \eqref{eq:defB} from the proof of the lemma above. By a lemma due to Buchholz \cite{Buchholz74},  the square roots of $C_m$ and $C_s$ differ by a Hilbert Schmidt operator if $C_m-C_S$ is of trace class, so we  are done when we show that $A-1$ and $B-1$ are trace class operators. 
 \begin{description}
 \item[$A-1$ is trace class]
Since the remaining term is a finite rank operator, it suffices to show that $M_{\sqrt{\omega|\mathbf k|^{-1}-1}}P_{\psi}$ is a Hilbert Schmidt operator on $\mathcal{H}_1(I)$, the completion of $\mathcal{D}(I)$ with respect to the scalar product $\langle\cdot,\cdot\rangle_{1}$ from 
the proof above. 

This is certainly the case if 
 \[ A'=M_{\sqrt{|\mathbf k|^{-1}-\omega^{-1}}}\,P_{\psi}
 \chi M_{\omega^{\frac12}}
\]
is Hilbert Schmidt on $L^2(\mathbb{R})$. Here, $\chi$ is understood  as a multiplication operator in position space, which turns into a convolution operator in Fourier space.  Hence, the integral kernel of $A'$ in Fourier space is
 \[ A'(\mathbf k,\mathbf p)=\sqrt{|\mathbf k|^{-1}-\omega^{-1}}(\hat{\chi}(\mathbf k-\mathbf p)-\hat{\psi}(\mathbf k)\hat{\chi}(\mathbf p))\omega^{\frac12}(\mathbf p)\ .\] 
The claim follows once we show that the function $(\mathbf k,\mathbf p)\mapsto A'(\mathbf k,\mathbf p)$ is square integrable. For $|\mathbf k|>m$ this is obvious, hence we restrict the function to $|\mathbf k|\le m$.
 We split the difference in the middle of the term above in the form
 \[
 \hat{\chi}(\mathbf k-\mathbf p)-\hat{\psi}(\mathbf k)\hat{\chi}(\mathbf p)=(1-\hat{\psi}(\mathbf k))\hat{\chi}(\mathbf k-\mathbf p)+\hat{\psi}(\mathbf k)(\hat{\chi}(\mathbf k-\mathbf p)-\hat{\chi}(\mathbf p))
 \]
and obtain a decomposition of $A'$ into two terms.
 Both terms are square integrable. For the first term we exploit $\hat{\psi}(0)=1$, hence $|1-\hat{\psi}(\mathbf k)|\le\mathrm{const}|\mathbf k|$, and for the second we  use the mean value theorem and the fact that $\hat \chi$ and all its derivatives are  rapidly decreasing, such that
\[
|\hat{\chi}(\mathbf p+\mathbf k)-\hat{\chi}(\mathbf p)|\le|\mathbf k|\ \mathrm{sup}_{0\le\lambda\le 1}|\frac{d}{d\mathbf p}\hat{\chi}(\mathbf p+\lambda \mathbf k)|\le|\mathbf k|\frac{c}{(||\mathbf p|-m|+1)^N}
\]
for $|\mathbf k|\leq m$.
 \item[$B-1$ is trace class]
 Proceeding as in the previous case consider the integral kernel
 \[
 B'(\mathbf k,\mathbf p)=\sqrt{\omega(\mathbf k)-|\mathbf k|}\hat{\chi}(\mathbf k-\mathbf p)\omega(\mathbf p)^{-\frac12}\ .
 \]
 We set $\mathbf q=\mathbf k-\mathbf p$ and have to show that
 the map 
 \[
(\mathbf k,\mathbf q)\to \sqrt{\omega(\mathbf k)-|\mathbf k|}\hat{\chi}(\mathbf q)\omega(\mathbf k-\mathbf q)^{-\frac12}\] 
 is square integrable. We consider first the $\mathbf k$ integral. We have 
 \[
\omega(\mathbf k)-|\mathbf k|=\frac{m^2}{\omega(\mathbf k)+|\mathbf k|}\ ,
\]
therefore the $\mathbf k$ integral is polynomial bounded in $\mathbf q$,
\[
\int \frac{\omega(\mathbf k)-|\mathbf k|}{\omega(\mathbf k-\mathbf q)}\, d\mathbf k \le \mathrm{const} \ln{(1+|\mathbf q|)} \ .\]
Since $\hat{\chi}$ decreases quickly, the square integrability follows.
\end{description} 
\end{proof}

\section{Interacting local net of the sine-Gordon model}
\subsection{Formal $S$-matrix in the DM representation}

We will now see how in the DM representation, using a certain class of states, we can further improve the estimates from~\cite{BR16} on the $S$-matrix and the interacting fields of the sine-Gordon model.

\subsubsection{Convergence}  
Our starting point is the abstract algebra $\fA$ generated by vertex operators, as defined in \cite{BR16}. These operators are the ones occuring in the $S$-matrix of the sine-Gordon model, which as was already given in~\cite{BR16} is 
\beqan\label{eq:TnHSGV}
S(\lambda V(g))&=&\sum_n \tsf 1 {n!} 
(\tfrac{i\lambda}{\hbar})^n (\tsf 1 2)^n\ \cT_n((V_{a}(g) + V_{-a}(g))^{\otimes n}) 
\nonumber
\\ &=&
\sum_n \lambda^n\underbrace{\tsf 1 {n!}  (\tfrac{i}{\hbar})^n (\tsf 1 2)^n \sum_{k=0}^n \left(n \atop k\right) \cT_n\Big(V_{a}(g)^{\otimes k}\otimes V_{-a}(g)^{\otimes  (n-k)}\Big)}_{\displaystyle \doteq  S_n(V(g))}
\,,
\eeqan
for $g \in \cDd(\M)$. Here, $\cT_n$ abbreviates the $n$-th order time ordered product, which (following \cite{BR16} and using the notation introduced in section \ref{Algebra}) is given by
\[
\cT_n(V_{a_1}(g) \otimes \dots \otimes V_{a_n}(g))=\no{\cT^{H_1}_n(v_{a_1}\otimes\dots \otimes v_{a_n})(g^{\otimes n})}\,,
\]
where
\begin{multline}
\cT_n^{H_\mu} \left(v_{a_1} \otimes \dots \otimes v_{a_n}\right)(g^{\otimes n})\\= \int e^{i(a_1\Phi(x_1) + \dots + a_n\Phi(x_n))}
\underbrace{e^{- \sum_{1\leq i< j\leq n}a_i a_j \hbar \Delta^{\rm F}_\mu(x_i,x_j)}}_{\displaystyle\doteq t^\mu_{n,\underline{a}}(\underline{x})}
g^{\otimes n}(\underline{x}) 
\,,\label{eq:TnHexplicit}
\end{multline}
and
 \beq\label{eq:Feynm}
\Delta^{\rm F}_\mu(x,y) = - \tsf 1 {4 \pi} \ln(-\mu^2(x-y)^2 - i\epsilon)
\eeq
is the Feynman propagator. We obtain
\be\label{TnH}
\cT_n(V_{a_1} \otimes \dots \otimes V_{a_n})(g^{\otimes n})=
\int \no{e^{i(a_1\Phi(x_1) + \dots + a_n\Phi(x_n))}}
\ t^1_{n,\underline{a}}(\underline{x}) \ 
g^{\otimes n}(\underline{x}) 
\,.
\ee

Before proving convergence, we explain how to extend the DM representation to operator-valued distributions of the form
\[
\no{e^{i(a_1\Phi(x_1) + \dots + a_n\Phi(x_n))}}{}_\mu
\]
To that end, let $h$ be a test density. We have
\[\no{e^{i\Phi(h)}}{}_\mu=e^{i\Phi(h)}e^{\frac12\langle h,H_\mu h\rangle}\]
and therefore 
\begin{align*}
|\langle \Omega_0\otimes \xi,\pi_\psi(\no{e^{i\Phi(h)}}{}_\mu)\Omega_0\otimes\xi\rangle|
&\le
e^{\frac12\langle h,H_\mu h\rangle}\langle\Omega_0,e^{i\phic(P_{\psi} h)}\Omega_0\rangle|\langle\xi,e^{iq\text{\small $\int$}\! h-ip\text{\small $\int$}\! h\Delta\psi}\xi\rangle| \nonumber \\
&\le e^{-\frac12 \langle\psi,H_\mu\psi\rangle(\text{\small $\int$}\!h)^2+\langle h,H_{\mu}\psi\rangle\text{\small $\int$}\!h}\int dp|\overline{\xi(p)}\xi(p-\!\text{\tiny $\int$}\! h)|
\end{align*}
where $\Omega_0$ is the Fock vacuum and $ \xi \in \mathrm{L}^2(\R)$.
Given a sequence of $(h_m)_m$ with $\int h_m=0$ for all $m$ that approximates the measure $\sum_{i=1}^n a_i (\delta_{x_i}-\delta_{y_i})$ in the sense of distributions
with fixed compact support, the last integral can be uniformly estimated by  $||\xi||^2$. We thus obtain the estimate
\be \label{eq:vertexDM}
|\langle \Omega_0\otimes \xi,\pi_{\psi}(\no{e^{i\sum_ja_j(\Phi(x_j)-\Phi(y_j))}})\Omega_0\otimes \xi\rangle|\le||\xi||^2\ .\ee
\medskip

\begin{prop}
Let $\psi$ be a test density with total integral 1 and consider the representation $\pi_\psi$ of the field on $\mathcal{H}=\mathcal{H}_0\otimes L^2(\mathbb{R})$ from~\cite{DM06}, which we recounted above in equation \eqref{eq:RepDM}. 
Let $ f\in \cD(\M)$, $\Omega_0$  the Fock vacuum and $ \xi \in \mathrm{L}^2(\R)$ with $||\xi||=1$.
Then
\[||\pi_{\psi}(S_n(V(g)))\Psifxi|| \le C^n (n!)^{\frac{1-p}{p}}\]
where $1<p<\frac{4\pi}{\hbar a^2}$ and $C>0$ depends on $g$, $a$ and $p$.
\end{prop}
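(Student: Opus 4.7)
The plan is to reduce $\|\pi_\psi(S_n(V(g)))\Psifxi\|^2$ to a scalar Coulomb-gas integral over $2n$ points in $\M$ and then apply the convergence estimates of~\cite{BR16} via H\"older's inequality.

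\emph{Step 1: algebraic reduction.} Using $\|A\Psi\|^2=\langle\Psi,A^*\star A\,\Psi\rangle$ together with \eqref{eq:TnHSGV} and \eqref{TnH}, I expand $S_n(V(g))^*\star S_n(V(g))$ as a finite sum, indexed by charge assignments $\underline a,\underline a'\in\{\pm a\}^n$ (at most $4^n$ choices), of $2n$-fold integrals whose integrand is a product of the time-ordered weights $\overline{t^1_{n,\underline a}(\underline x)}\,t^1_{n,\underline a'}(\underline y)$, the test densities $\overline{g^{\otimes n}(\underline x)}\,g^{\otimes n}(\underline y)$, and the algebraic product $\no{e^{-i\sum_j a_j\Phi(x_j)}}\star\no{e^{i\sum_k a'_k\Phi(y_k)}}$. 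The star product of the two normal-ordered exponentials is a single normal-ordered exponential of the $2n$ fields multiplied by cross-Wick scalar factors $\prod_{j,k}(-(x_j-y_k)^2+i\epsilon(x_j^0-y_k^0))^{-\hbar a_j a'_k/(4\pi)}$. The overall combinatorial prefactor is $(n!\,\hbar^{n}\,2^{n})^{-2}$.

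\emph{Step 2: state bound.} The expectation of the resulting normal-ordered exponential in $\Psifxi$ is estimated by~\eqref{eq:vertexDM}. The outer factors $\pi_\psi(e^{\pm i\Phi(f)})$ coming from $\Psifxi$ are absorbed by the algebraic BCH rule: commuting them through the normal-ordered vertex exponential produces only smooth, uniformly bounded scalar phase and amplitude factors depending on $f$ and on positions in $\supp g^{\otimes n}$ but not on $n$; call their product $C_f$. After taking moduli, \eqref{eq:vertexDM} supplies a uniform pointwise bound $\|\xi\|^2\,C_f$ on the remaining matrix element, and $\|\pi_\psi(S_n(V(g)))\Psifxi\|^2$ is majorized by a sum of at most $4^n$ integrals of the form
\[
\frac{C_0^n}{(n!)^2}\int\prod_{i<j}|x_i-x_j|^{\hbar a_i a_j/(2\pi)}\prod_{i<j}|y_i-y_j|^{\hbar a'_i a'_j/(2\pi)}\prod_{j,k}|x_j-y_k|^{-\hbar a_j a'_k/(2\pi)}\,|g|^{\otimes n}(\underline x)\,|g|^{\otimes n}(\underline y)\,d\underline x\,d\underline y.
\]

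\emph{Step 3: Coulomb-gas estimate.} The integrals above are precisely the neutral two-dimensional Coulomb-gas integrals controlled in~\cite{BR16}: $2n$ charges $\pm a\sqrt{\hbar}$ with vanishing net charge on each $n$-block. Applying H\"older's inequality with exponent $p$ on the $\underline y$-integration and its dual on the $\underline x$-integration, together with the bounds of~\cite{BR16} valid for $1<p<4\pi/(\hbar a^2)$, each such integral is bounded by $C_1^n (n!)^{2/p}$ with $C_1$ depending on $g,a,p$. Combining with the $(n!)^{-2}$ prefactor and the $4^n$-fold sum, and taking the square root, yields $\|\pi_\psi(S_n(V(g)))\Psifxi\|\le C^n(n!)^{1/p-1}=C^n(n!)^{(1-p)/p}$, as claimed.

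The main obstacle I expect is Step 2: the Wick analysis of the star product of the two normal-ordered exponentials in the DM representation, the uniform bookkeeping of the $\pi_\psi(e^{\pm i\Phi(f)})$ factors (including for the charge non-neutral sub-configurations $\sum a_j\neq\sum a'_k$, for which \eqref{eq:vertexDM} does not apply verbatim and must be replaced by the more general pre-\eqref{eq:vertexDM} estimate), and the verification that the Lorentzian $i\epsilon$-regularized two-point functions reduce, after taking moduli, exactly to the Euclidean-type Coulomb-gas weights for which the convergence estimates of~\cite{BR16} are formulated. Modulo these technicalities, Step 3 is a direct application of the estimates already available there.
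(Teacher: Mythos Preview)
Your overall strategy matches the paper's, but you introduce an unnecessary complication that you then resolve incorrectly. In Step~1 you expand the full square $S_n(V(g))^*\star S_n(V(g))$, producing cross-terms indexed by independent sign assignments $\underline a,\underline a'\in\{\pm a\}^n$. The resulting $2n$-point Coulomb gas carries charges $(a_1,\dots,a_n,-a'_1,\dots,-a'_n)$, whose total is $2a(k-k')$ if $\underline a$ has $k$ plus-signs and $\underline a'$ has $k'$. So your claim in Step~3 that every such integral is neutral (``vanishing net charge on each $n$-block'') is false for $k\neq k'$, and the Fr\"ohlich/\cite{BR16} bound you invoke is stated for the neutral case only. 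You yourself flag the non-neutral issue in the closing paragraph, but only for the state bound~\eqref{eq:vertexDM}, not for the Coulomb-gas estimate.

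The paper sidesteps this entirely by applying the triangle inequality \emph{before} squaring: it bounds $\|\pi_\psi(\cT_n(V_{a_1}\otimes\cdots\otimes V_{a_n})(g^{\otimes n}))\Psifxi\|$ separately for each fixed $\underline a$. Since the adjoint then carries exactly the opposite charges $-a_1,\dots,-a_n$ at the $y$-points, the resulting $2n$-point integral is automatically neutral and~\eqref{eq:vertexDM} applies verbatim (the conjugation by $e^{\pm i\Phi(f)}$ contributes only a pure phase). The second key observation---which you do not make---is that swapping $x_i\leftrightarrow y_i$ whenever $a_i=-a$ turns every such integral into the \emph{same} neutral integral with all $x$-charges $+a$ and all $y$-charges $-a$; the $2^n$ identical terms then cancel the $(\tfrac12)^n$ in $S_n$. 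Your approach can be repaired by Cauchy--Schwarz on the cross-terms, which brings you back to the diagonal and hence to the paper's argument, but as written Step~3 is a gap.
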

\begin{proof}
We use the expansion above of $S_n(V(g))$ into a sum of time ordered products of vertex operators and compute their absolute squares
\[\cT_n(V_{a_1} \otimes \dots \otimes V_{a_n} )(g^{\otimes n})^*\star\cT_n(V_{a_1} \otimes \dots \otimes V_{a_n} )(g^{\otimes n})\]
\[=\int  g^{\otimes n}(\underline{x})g^{\otimes n}(\underline{y})\no{e^{i\sum_j(\Phi(x_j-\Phi(y_j))}}\prod_{i<j}|(x_i-x_j)^2|^{\frac{\hbar a_ia_j}{4\pi}}|(y_i-y_j)^2|^{\frac{\hbar a_ia_j}{4\pi}}
\prod_{i,j=1}^n|(x_i-y_j)^2|^{-\frac{\hbar a_ia_j}{4\pi}}\]
\[\times e^{i(\sum_{i<j}a_ia_j(\Delta_D(x_i-x_j)-\Delta_D(y_i-y_j))-\frac12\sum_{i,j=1}^n\Delta(x_i-y_j))}\]
with the Dirac propagator 
\[\Delta_D=\frac12(\Delta_R+\Delta_A)\ .\]
Their contribution to the norm can be estimated by
\[
\|\pi_{\psi}(\cT_n(V_{a_1} \otimes \dots \otimes V_{a_n} )(g^{\otimes n}))\pi_{\psi}(e^{i\Phi(f)})\Omega_0\otimes \xi\|^2\]
\[\le \int  |g^{\otimes n}(\underline{x})g^{\otimes n}(\underline{y})\prod_{i<j}|(x_i-x_j)^2|^{\frac{\hbar a_ia_j}{4\pi}}|(y_i-y_j)^2|^{\frac{\hbar a_ia_j}{4\pi}}
\prod_{i,j=1}^n|(x_i-y_j)^2|^{-\frac{\hbar a_ia_j}{4\pi}}
\]
\[
\times |\langle \Omega_0\otimes \xi,\pi_{\psi}(\no{e^{i\sum_j(\Phi(x_j)-\Phi(y_j))}})\Omega_0\otimes \xi\rangle|
\]
since both $\Delta$ and $\Delta_D$ are real valued.

To estimate the expectation value in the last line, we use \eqref{eq:vertexDM}, and considering $\xi$ with  $||\xi||=1$, we obtain the bound
\[
\|\pi_{\psi}(\cT_n(V_{a_1} \otimes \dots \otimes V_{a_n} )(g^{\otimes n}))
\pi_{\psi}(e^{i\Phi(f)})\Omega_0\otimes \xi\|^2
\]
\[
\le \int  |g^{\otimes n}(\underline{x})g^{\otimes n}(\underline{y})\prod_{i<j}|(x_i-x_j)^2|^{\frac{\hbar a_ia_j}{4\pi}}|(y_i-y_j)^2|^{\frac{\hbar a_ia_j}{4\pi}}
\prod_{i,j=1}^n|(x_i-y_j)^2|^{-\frac{\hbar a_ia_j}{4\pi}}\ .
\]
In the expansion of $S_n(V(g))$ into time ordered products of smeared vertex operators only coefficients $a_i=\pm a$ occur. We may rename the coordinates such that $x_i$ and $y_i$ are exchanged if $a_i=-a$. We observe that the estimate is independent of the signs of the coefficients $a_i$ and find
\[||\pi_{\psi}(S_n(V(g)))\pi_{\psi}(e^{i\Phi(f)})\Omega_0\otimes \xi||^2\]
\[\le \left(\frac{1}{n!}\right)^2\int  |g^{\otimes 2n}(\underline{x},\underline{y})\prod_{i<j}|(x_i-x_j)^2|^{\frac{\hbar a^2}{4\pi}}|(y_i-y_j)^2|^{\frac{\hbar a^2}{4\pi}}
\prod_{i,j=1}^n|(x_i-y_j)^2|^{-\frac{\hbar a^2}{4\pi}}\ .\]
We can now use the results of \cite{BR16} (where an older estimate of Fr\"ohlich \cite{Frohlich} in the Euclidean theory was adapted to the Minkowski signature) and find
\[||\pi_{\psi}(S_n(V(g)))\pi_{\psi}(e^{i\Phi(f)})\Omega_0\otimes \xi|| \le C^n (n!)^{\frac{1-p}{p}}\]
where $1<p<\frac{4\pi}{\hbar a^2}$ and $C>0$ depends on $g$, $a$ and $p$. 
\end{proof}
As a consequence of this estimate, the expansion of the S-matrix $S(\lambda V(g))$ converges in the representation $\pi_{\psi}$ strongly on the dense domain $D$ which is spanned by vectors of the form $\Psifxi$ and defines  an operator $S(\lambda V(g))_{\psi}$. 
\begin{prop}\label{prop:SpsiUnitary}
$S(\lambda V(g))_{\psi}$ is unitary.
\end{prop}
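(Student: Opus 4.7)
The plan is to exhibit $S(\lambda V(g))_\psi^{-1}$ directly and identify it with $S(\lambda V(g))_\psi^*$, using the formal unitarity of the $S$-matrix in the algebra $\fA$ together with the factorially decaying estimate of the previous proposition.

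First I would introduce the anti-time-ordered $S$-matrix $\bar S(\lambda V(g))$ obtained from \eqref{eq:TnHSGV}--\eqref{eq:TnHexplicit} by simultaneously replacing the Feynman propagator $\Delta^{\mathrm F}_\mu$ by its complex conjugate $\overline{\Delta^{\mathrm F}_\mu}$ and $i/\hbar$ by $-i/\hbar$. Since $V_a(g)^* = V_{-a}(g)$ for real $g$ (so that $V(g)$ is self-adjoint) and the $*$-involution on $\fA$ effects precisely these two substitutions in \eqref{eq:TnHexplicit}, one has $S(\lambda V(g))^* = \bar S(\lambda V(g))$ order by order in $\lambda$. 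The absolute-value estimate in the previous proposition depends only on moduli of vertex exponentials and not on the phases coming from $\Delta^{\mathrm F}_\mu$ or $(i/\hbar)^n$, so its proof applies verbatim to $\bar S(\lambda V(g))$: the partial sums converge strongly on $D$ to an operator $\bar S(\lambda V(g))_\psi$, which is therefore the natural candidate for $S(\lambda V(g))_\psi^*$.

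The key input is the formal unitarity
\[
\bar S(\lambda V(g)) \star S(\lambda V(g)) \;=\; 1 \qquad \text{in } \fA[[\lambda]],
\]
equivalently $\sum_{k+l=n}\bar S_k \star S_l = \delta_{n,0}\cdot 1$ for every $n\ge 0$. This is a standard identity for time-ordered products in pAQFT and ultimately rests on the inverse relation between $\cT_n$ and $\bar\cT_n$ at the level of integral kernels together with causal factorization. Applied in the representation on $\Psi\in D$, it yields
\[
\bar S(\lambda V(g))_\psi \, S(\lambda V(g))_\psi \Psi
 \;=\; \sum_{n\ge 0}\lambda^n \sum_{k+l=n}\pi_\psi(\bar S_k \star S_l)\Psi \;=\; \Psi,
\]
provided the double series on the left can be reordered by Fubini. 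Running the same computation with the roles of $S$ and $\bar S$ interchanged gives $S_\psi \bar S_\psi = 1$ on $D$, so $S(\lambda V(g))_\psi$ extends by continuity from the dense domain $D$ to a unitary operator on $\cH$ with inverse $\bar S(\lambda V(g))_\psi = S(\lambda V(g))_\psi^*$.

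The main technical obstacle is the Fubini step. To justify it I would establish a uniform bound of the form
\[
\|\pi_\psi(\bar S_k \star S_l)\Psifxi\| \;\leq\; \tilde C^{k+l}(k!\,l!)^{(1-p)/p}, \qquad \|\xi\|=1,
\]
with $1<p<4\pi/(\hbar a^2)$, by reducing the mixed product $\bar\cT_k(V^{\otimes k})\star\cT_l(V^{\otimes l})$ via \eqref{no-product} to a single normal-ordered vertex expression integrated against a product of factors $|(x_i-x_j)^2|^{\pm \hbar a^2/4\pi}$, the signs depending on whether each pair arises internally from $\bar\cT_k$, internally from $\cT_l$, or from the $\star$-contraction between them. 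This exponent structure is exactly of the form treated in \cite{BR16} (following \cite{Frohlich}), so the stability estimate applies to the enlarged $2(k+l)$-point configuration, produces a majorant that is absolutely summable in $k$ and $l$, and legitimizes both the reordering of the double sum and the identification $S_\psi^*=\bar S_\psi$.
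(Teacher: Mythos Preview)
Your approach works in spirit but introduces an unnecessary domain problem that the paper sidesteps entirely. You want to show $\bar S_\psi S_\psi=1$ as an operator identity on $D$, but $\bar S_\psi$ is only defined as a strong limit on $D$, and $S_\psi\Psi$ has no reason to lie in $D$. Your mixed bound $\|\pi_\psi(\bar S_k\star S_l)\Psi\|\le\tilde C^{k+l}(k!\,l!)^{(1-p)/p}$ does give absolute convergence of the \emph{formal} double sum $\sum_{k,l}\lambda^{k+l}\pi_\psi(\bar S_k\star S_l)\Psi$, and formal unitarity then collapses it to $\Psi$; but you have not argued why this double sum equals the operator composition $\bar S_\psi(S_\psi\Psi)$, and without knowing in advance that the $\pi_\psi(S_n)$ are bounded on all of $\cH$ (the estimate of the previous proposition is only for the special vectors $\Psifxi$, not for arbitrary $\Psi\in D$ with a $\Psi$-independent constant) this identification is not automatic.

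The paper avoids this by never composing operators. It simply computes
\[
\|S_\psi\Psi\|^2=\sum_{n,m}\lambda^{n+m}\langle\pi_\psi(S_n)\Psi,\pi_\psi(S_m)\Psi\rangle
=\sum_k\lambda^k\Big\langle\Psi,\pi_\psi\Big(\sum_{n+m=k}S_n^*S_m\Big)\Psi\Big\rangle=\|\Psi\|^2,
\]
where the Fubini step needs only the already-established absolute convergence of $\sum_n\lambda^n\|\pi_\psi(S_n)\Psi\|$ and Cauchy--Schwarz; no mixed bound is required. Thus $S_\psi$ is isometric on $D$, hence extends to an isometry on $\cH$, and the same argument for the adjoint series gives that $S_\psi^*$ is isometric, whence unitarity. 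Your extra estimate on $\bar S_k\star S_l$ is exactly what the paper invokes \emph{afterwards} (in the subsequent proposition) to pass from formal identities among series to identities among the resulting bounded operators when proving Bogoliubov factorization---but for unitarity itself the isometry route is both shorter and free of the domain issue.
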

\begin{proof}
The formal power series $S(\lambda V(g))$ is unitary. This means that
\[\sum_{n+m=k}S_n^*S_m=\delta_{k0}=\sum_{n+m=k}S_nS_m^*\]
with 
\[S_n=S_n(V(g))\ .\]
We have for $\Psi\in D$
\[||S(\lambda V(g))_{\psi}\Psi||^2=\sum_{n,m}\lambda^{n+m}\langle \pi_{\psi}(S_n(V(g))\Psi,\pi_{\psi}(S_m(V(g))\Psi\rangle\]
\[=\sum_k\lambda^k\langle\Psi,\pi_{\psi}(\sum_{n+m=k}S_n^*S_m)\Psi\rangle=||\Psi||^2\ ,\]
hence $S(\lambda V(g))_{\psi}$ is isometric and has a unique extension to an isometry on $\cH$.
Analoguous arguments can be used for the adjoint power series and yield that also the adjoint of $S(\lambda V(g))_\psi$ is an isometry. 
This proves the claim.
\end{proof}

\subsection{Construction of local observables}
In this section we construct local algebras of observables for the sine-Gordon model. 

We follow the prescription given in \cite{Vienna} on how to construct the interacting local net of observables, given a family of unitaries that are interpreted as local S-matrices, which we recall here.

Let $\sD\doteq\Gamma_c(E\rightarrow \M)$ be the space of test objects over $\M$ (compactly supported sections of some vector bundle $E$ over $\M$). Consider unitaries {$S(f)$}, $f\in\sD$
with $S(0)=0$, which generate a *-subalgebra  of {$\fA$} and satisfy for $f,g,h\in\sD$  Bogoliubov's factorization relation
\begin{equation}\label{Bogoliubov}
{S(f+g+h)=S(f+g)S(g)^{-1}S(g+h)} 
\end{equation}
if the past {$J_-$} of {$\supp h$} does not intersect  {$\supp f$}
(or, equivalently, if the future {$J_+$} of {$\supp f$} does not intersect  {$\supp h$}).
\begin{defi}
	Define the relative S-matrices as
	\begin{equation}\label{relS}
	{f\mapsto S_g(f)=S(g)^{-1}S(g+f)}\ .
	\end{equation}
\end{defi}
\begin{defi}[\cite{Vienna}]
	The Haag-Kastler net {$\fA_g$} of the interacting theory is then defined by
	the local algebras {$\fA_g(\cO)$} which are generated by the relative S-matrices $S_g(f),\supp f\subset \cO$.
\end{defi}
Note that $g$ plays a role of cutoff function that labels local interactions and $S_g(f)$ is interpreted as the retarded observable under the influence of the interaction labeled by $g$. Next we take the algebraic adiabatic limit.
\begin{defi}[\cite{Vienna}]
	 Let $G\in\Gamma(E\rightarrow \M)$ (no support restriction). Set
	\begin{equation}\nonumber
	[G]_\cO=\{g\in\sD^n|g\equiv G \text{ on a neighborhood of }J_+(\cO)\cap J_-(\cO)\} \ .
	\end{equation}
	We consider the $\fA$-valued maps
	\begin{equation}\nonumber
	S_{G,{\cO}}(f):[G]_{\cO}\ni g\mapsto S_g(f)\in\fA\ .
	\end{equation}
	The local algebra {$\fA_{G}(\cO)$} is defined to be the algebra generated by {$S_{G,{\cO}}(f),\supp f\subset \cO$}. 
\end{defi}
The interpretation as ``adiabatic limit'' follows from the fact that $G$ can be set to be constant and this  corresponds to removing the cutoff from the interaction.
\begin{thm}[\cite{Vienna}]
The net  {$\fA_{G}(\cO)$} with $G=\const$ satisfies the Haag-Kastler axioms Isotony, Covariance and Locality, i.e.
\begin{description}
\item[Isotony] For each inclusion $\cO_1\subset\cO_2$ there exists an injective homomorphism
\[i_{\cO_2\cO_1}:\fA_{G}(\cO_1)\to\fA_{G}(\cO_2)\]
such that $i_{\cO_3\cO_2}\circ i_{\cO_2\cO_1}=i_{\cO_3\cO_1}$.
\item[Covariance]
For each Poincar\'{e} transformation $L$ there exist isomorphisms
\[\alpha_L^{\cO}:\fA_G(\cO)\to\fA_G(L\cO)\]
such that
\[\alpha_L^{\cO_2}\circ i_{\cO_2\cO_1}=i_{L\cO_2 L\cO_1}\circ\alpha_L^{\cO_1}\ .\]
\item[Locality]
If $\cO_1,\cO_2$ are spacelike separated subsets of $\cO$, then
\[[i_{\cO\cO_1}(\fA_G(\cO_1)),i_{\cO\cO_2}(\fA_G(\cO_2))]=\{0\}\ .\]
\item[Time-Slice-Axiom]
Let $\cO_1\subset\cO_2$ be globally hyperbolic regions such that $\cO_1$ contains a Cauchy surface of $\cO_2$. Then the homomorphism $i_{\cO_2\cO_1}$ is an isomorphism. 
\end{description}
\end{thm}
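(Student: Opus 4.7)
The plan is to reduce the theorem to the general construction of \cite{Vienna}, which yields the Haag-Kastler axioms for any family $\{S(f)\}_{f\in\sD}$ of unitaries satisfying Bogoliubov's causal factorization~\eqref{Bogoliubov}. In our setting we take $S(f)=S(\lambda V(f))_\psi$, the strong limit from the previous subsection; unitarity is Proposition~\ref{prop:SpsiUnitary}, while Bogoliubov's relation is inherited from the formal time-ordered products~\eqref{TnH}, which factorize order-by-order in $\fA[[\lambda]]$ by the support properties of $\Delta^{\rm F}_\mu$, and lifts to an operator identity on the dense domain $D$ by the norm estimates of the previous proposition. Once (i) unitarity and (ii) Bogoliubov are in hand, each axiom follows by a short algebraic manipulation of the relative S-matrices $S_g(f)=S(g)^{-1}S(g+f)$.

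For \emph{Isotony}, $\cO_1\subset\cO_2$ implies $J_+(\cO_1)\cap J_-(\cO_1)\subset J_+(\cO_2)\cap J_-(\cO_2)$ and hence $[G]_{\cO_2}\subset[G]_{\cO_1}$, so restriction of $g\mapsto S_g(f)$ furnishes the homomorphism $i_{\cO_2\cO_1}$. The composition law is immediate, and injectivity follows from the observation that modifying $g\in[G]_{\cO}$ outside the causal shadow of $\cO$ changes $S_g(f)$ only by an inner automorphism: future and spacelike modifications leave it invariant, while past modifications conjugate it by a fixed unitary, both via Bogoliubov. For \emph{Covariance}, Poincar\'e invariance of the propagators $\Delta^{\rm F}_\mu$ and $H_\mu$ gives an automorphism $\alpha_L$ of $\fA$ with $\alpha_L(S(\lambda V(g)))=S(\lambda V(L_*g))$, and since $G\equiv\const$ is Poincar\'e invariant, $\alpha_L$ maps $[G]_{\cO}$ to $[G]_{L\cO}$ and intertwines the generators. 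For \emph{Locality}, if $\supp f_1$ and $\supp f_2$ are spacelike separated, neither lies in the causal past of the other, so Bogoliubov applies in both orderings:
\[
S(g+f_1+f_2)=S(g+f_1)S(g)^{-1}S(g+f_2)=S(g+f_2)S(g)^{-1}S(g+f_1),
\]
which rearranges to $S_g(f_1)S_g(f_2)=S_g(f_2)S_g(f_1)$, yielding the desired commutativity.

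The main obstacle is the \emph{Time-Slice Axiom}. Following the strategy of \cite{Vienna}, one would show that every generator $S_{G,\cO_2}(f)$ equals, modulo inner conjugation by a unitary built from changing the cutoff $g$ within $J_-(\cO_2)\setminus J_-(\cO_1)$, a generator $S_{G,\cO_2}(f')$ with $\supp f'\subset\cO_1$. Formally one writes $f'=f+Ph$ for a test density $h$ obtained by solving the Cauchy problem for $\square$ on $(\cO_2,\cO_1)$ (which is possible since $\cO_1$ contains a Cauchy surface of $\cO_2$), and uses the on-shell identity $\pi_\psi(\Phi(Ph))=0$ established right after~\eqref{eq:RepDM} to argue that the vertex operators in $S(f+Ph)$ can be identified with those of $S(f)$ up to inner conjugation. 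The delicate step is to upgrade these formal, order-by-order identities to genuine operator identities in the DM representation, since the interaction $V(f)$ is nonlinear in $\Phi$; I expect this to require a variant of the norm estimates of Proposition~\ref{prop:SpsiUnitary} that is uniform in the deformation parameter along the homotopy $f\mapsto f+tPh$, $t\in[0,1]$.
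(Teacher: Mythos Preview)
The paper does not contain a proof of this theorem: it is stated with the attribution \cite{Vienna} and no argument is given. Your proposal therefore cannot be compared against a proof in the paper, because there is none; the authors are quoting a general structural result from the pAQFT literature and then, in the following subsection, verifying its hypotheses (unitarity and Bogoliubov factorization) for the sine-Gordon S-matrices.

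That said, two comments on your write-up. First, you conflate the abstract theorem with its sine-Gordon instantiation: the theorem as stated concerns any family $\{S(f)\}$ satisfying \eqref{Bogoliubov}, and the sine-Gordon input (Proposition~\ref{prop:SpsiUnitary}, the norm estimates, the DM representation) belongs to the verification of the hypotheses, not to the proof of the theorem itself. Your sketches for Isotony, Covariance and Locality are fine and match the standard argument in \cite{Vienna}.

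Second, your Time-Slice argument has a genuine gap. Writing $f'=f+Ph$ and invoking $\pi_\psi(\Phi(Ph))=0$ works only for the \emph{linear} part $\Phi(h)$ of the label; it does nothing for the vertex-operator part $v_a(g)$, since $V_a(g+Ph)$ is not related to $V_a(g)$ by the field equation. The proof in \cite{Vienna} does not try to move $f$ on-shell. Instead it exploits Bogoliubov factorization to show that changing the interaction cutoff $g$ in the past (resp.\ future) of $\cO_1$ amounts to an inner automorphism (resp.\ the identity), so that a generator $S_g(f)$ with $\supp f\subset\cO_2$ can be rewritten, after suitable modification of $g$ within $[G]_{\cO_2}$, as a product of relative S-matrices whose labels are supported in a neighbourhood of the Cauchy surface contained in $\cO_1$. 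No ``uniform norm estimate along a homotopy'' is needed; the argument is purely algebraic once \eqref{Bogoliubov} is in hand.
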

\subsubsection{Local $S$-matrices of the sine-Gordon model}
We start with specifying the label set of test objects. We concentrate on three classes of interacting fields: the scalar field $\Phi$ itself, the interaction Lagrangian $\cos{a\Phi} $ and the term $\sin{a\Phi}$ occuring in the field equation. This amounts to consider test objects
\[
(g,h)\in\sD\doteq\cD(\M,\C)\oplus\cD(\M,\R)\ .
\]
and to define a map $L:\sD\rightarrow \cF_{\loc}$ by
\[
L(g,h)\doteq v_a(g)+v_{-a}(\overline{g})+\Phi(h)\ .
\]
In particular, for $g$ real valued, $\no{L(g,h)}$ is the interaction term of the sine-Gordon model. Using the functional formalism, we define the S-matrices as
\[
\cS(g,h)\equiv \mathcal{T} e^{i\no{L(g,h)}}\doteq \sum_{n=0}^{\infty}\tfrac{1}{n!}\left(\tfrac{i}{\hbar}\right)^n\cT_n(\no{L(g,h)}\nolimits^{\otimes n})=\sum_{n=0}^{\infty}\tfrac{1}{n!}\left(\tfrac{i}{\hbar}\right)^n\no{\cT^{H_1}_n(L(g,h)^{\otimes n})}\,.
\]
In Proposition~\ref{prop:SpsiUnitary} we have already shown that $\cS(g,h)_{\psi}$ is a well defined unitary operator for $g$ real valued and $h=0$, so it remains to prove that the same holds for arbitrary $\cS(g,h)_{\psi}$, $(g,h)\in\sD$. The estimate for complex valued $g$ and $h=0$ is identical. To include the general case, we use the fact that on the level of formal power series we have
\[\mathcal{T} e^{i\no{L(g,h)}}=\cS(g,0)\T \mathcal{T} e^{i\Phi(h)}\ .\]
where
\[\mathcal{T}e^{i\Phi(h}=e^{i\Phi(h)}e^{\frac{i}{2}\langle h,\Delta_D h\rangle}\]
\begin{prop}
For all $(g,h)\in\sD$, $\cS(g,h)_\psi$ is a well defined unitary operator on the Hilbert space $\cH$.
\end{prop}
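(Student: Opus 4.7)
The plan is to use the formal power series identity
\[\cS(g,h) \;=\; \cS(g,0)\T \mathcal{T} e^{i\Phi(h)}, \qquad \mathcal{T}e^{i\Phi(h)} \;=\; e^{i\Phi(h)}\,e^{\frac{i}{2}\langle h,\DD h\rangle},\]
to reduce to a variant of Proposition~\ref{prop:SpsiUnitary} for complex-valued couplings.

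First, inspection of the proof of Proposition~\ref{prop:SpsiUnitary} shows that the convergence estimate depends only on the pointwise magnitudes of the couplings; it therefore extends verbatim to complex-valued $g$, and more generally to any formal series of the shape
\[\sum_n \tfrac{1}{n!}(i/\hbar)^n \cT_n\bigl((v_a(g^+) + v_{-a}(g^-))^{\otimes n}\bigr)\]
with arbitrary compactly supported $g^\pm \in \cD(\M,\C)$, yielding strong convergence on the dense domain $D$ to a bounded operator on $\cH$.

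Second, I evaluate $\cS(g,0)\T \mathcal{T}e^{i\Phi(h)}$ at the algebra level by carrying out the Feynman contractions between the vertex operators in $\cS(g,0)$ and the $\Phi(h)$ factors in $\mathcal{T}e^{i\Phi(h)}$. Each contraction between $v_{\pm a}(x)$ and $\Phi(h)$ produces the smooth $x$-dependent factor $\exp\!\bigl(\mp ia\hbar\!\int\! \Delta^{\rm F}_1(x,y)h(y)\,dy\bigr)$, which I absorb into modified couplings $g^\pm \in \cD(\M,\C)$; the $\Phi(h)$--$\Phi(h)$ self-contractions combine with the prefactor $e^{\frac{i}{2}\langle h,\DD h\rangle}$ into a single scalar phase $e^{i\theta(g,h)}$; and the uncontracted $\Phi(h)$ copies sum up to $e^{i\Phi(h)}$ standing to the right. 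One thus obtains
\[\pi_\psi(\cS(g,h)) \;=\; e^{i\theta(g,h)}\,\pi_\psi(\tilde\cS)\,\pi_\psi(e^{i\Phi(h)}),\]
where $\tilde\cS$ is of the form controlled in Step 1. By Step 1 and the fact that $\pi_\psi(e^{i\Phi(h)})$ is the Weyl unitary of~\eqref{eq:RepDM}, the right-hand side is a bounded operator on $\cH$, so $\pi_\psi(\cS(g,h))$ converges strongly on $D$ to a bounded operator $\cS(g,h)_\psi$.

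Unitarity then follows by the formal-power-series argument at the end of the proof of Proposition~\ref{prop:SpsiUnitary}: since $v_a(g)^* = v_{-a}(\overline g)$ and $\Phi(h)^* = \Phi(h)$, the interaction $L(g,h)$ is self-adjoint, so $\cS(g,h)$ is unitary as a formal power series; combined with the convergence just established, the isometry computation $\|\cS(g,h)_\psi \Psi\|^2 = \|\Psi\|^2$ on $\Psi \in D$ (and the analogous one for the adjoint series) applies unchanged. The main technical obstacle is the bookkeeping in the second step, in particular verifying that $e^{i\theta(g,h)}$ is a scalar and that the uncontracted $\Phi(h)$'s assemble cleanly into a single $e^{i\Phi(h)}$ on the right; both follow from the standard Feynman contraction calculus for time-ordered products of exponential functionals, together with the fact that $\int\!\Delta^{\rm F}_1(x,y)h(y)dy$ is smooth in $x$ for $h\in \cD(\M,\R)$.
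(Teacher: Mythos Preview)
Your overall strategy coincides with the paper's: both start from the formal identity $\cS(g,h)=\cS(g,0)\T\cT e^{i\Phi(h)}$ and reduce to the estimate already proved for $h=0$. The difference, and the place where your argument has a genuine gap, is Step~2.

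You claim that carrying out the Feynman contractions between the vertex operators and $e^{i\Phi(h)}$ produces factors $\exp\bigl(\mp ia\hbar\!\int\!\Delta^{\rm F}_1(x,y)h(y)\,dy\bigr)$ absorbed into new couplings $g^{\pm}$, while the uncontracted $\Phi(h)$'s ``sum up to $e^{i\Phi(h)}$ standing to the right''. This last assertion is not correct as stated. Doing the time-ordered contractions leaves you with $\no{e^{i\sum a_i\Phi(x_i)+i\Phi(h)}}$ (times scalars); to split off $e^{i\Phi(h)}$ as a genuine right operator factor you must pass from this joint normal ordering to a $\star$-product, which introduces \emph{additional} contractions governed by the Wightman function $W_1$. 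The net factor attached to each vertex is therefore governed by $\Delta^{\rm F}_1-W_1$, not by $\Delta^{\rm F}_1$ alone. Your factorization $\pi_\psi(\cS(g,h))=e^{i\theta}\pi_\psi(\tilde\cS)\pi_\psi(e^{i\Phi(h)})$ with the stated $g^{\pm}$ does not hold.

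The paper handles exactly this point with the identity
\[
\no{e^{i\sum_i a_i\Phi(x_i)}}\T e^{i\Phi(h)}=\no{e^{i\sum_i a_i\Phi(x_i)}}\star e^{i\Phi(h)}\cdot e^{i\sum_i a_i\Delta_D h(x_i)}\ ,
\]
where $\Delta_D=\tfrac12(\Delta_R+\Delta_A)$ is the \emph{real-valued} Dirac propagator. The extra factor thus has modulus~$1$, so in the norm estimate one simply absorbs $e^{i\Phi(h)}$ into the vector (replacing $f$ by $f+h$) and the bound from the $h=0$ case applies \emph{unchanged}. In particular there is no need to generalize the estimate to arbitrary complex $g^{\pm}$: with the correct propagator one has $|g^{\pm}|=|g|$. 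Once you replace $\Delta^{\rm F}_1$ by $\Delta_D$ in your Step~2, your argument collapses to the paper's and the detour through Step~1 becomes unnecessary.
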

\begin{proof}
We have
\[\no{e^{i\sum_i a_i\Phi(x_i)}}\T\  e^{i\Phi(h)}=\no{e^{i\sum_i a_i\Phi(x_i)}}\star\  e^{i\Phi(h)}e^{i\sum_i a_i\Delta_D h(x_i)}\ .\]	
Inserting this formula into the estimate for
\[||\pi_{\psi}(\cT_n(\no{L(g,0)}\nolimits^{\otimes n})\T\  e^{i\Phi(h)})\pi_{\psi}(e^{i\Phi(f)})\Omega_0\otimes \xi||^2\]
we can repeat the arguments which show convergence of the sum of norms and also the unitarity of the sum.
\end{proof}

We have shown the existence of $S(f)$, $f\in\mathscr{D}$; for the construction of the interacting net we still need to show that they satisfy the factorization relation. We use the following general fact. 
\begin{prop}
Let $D\subset\tilde{D}\subset\cH$ be dense subspaces. Consider bounded operators  $O$ on $\cH$ which are defined by series of endomorphisms $O_n$ of $\tilde{D}$ which converge together with their adjoints strongly on $D$, i.e. the sequences 
$O\Psi=\sum_{n=0}^{\infty} O_n\Psi$ and  $O^*\Psi=\sum_{n=0}^{\infty} O^*_n\Psi$ both converge in norm for all $\Psi\in D$.
Let $O,O',O''$ be three such operators with the property that
\[\sum_{n+m=k}O_nO'_m=O''_k\ .\]
Then $OO'=O''$.
\end{prop}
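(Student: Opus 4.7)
My plan is to reduce the identity $OO'=O''$ to a statement about matrix elements on $D$, where both sides can be unfolded via the defining series. Since $D$ is dense in $\cH$ and both $OO'$ and $O''$ are bounded operators, it suffices to show $\langle\Phi,OO'\Psi\rangle=\langle\Phi,O''\Psi\rangle$ for all $\Phi,\Psi\in D$.

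For the left-hand side, boundedness of $O$ gives $\langle\Phi,OO'\Psi\rangle=\langle O^{*}\Phi,O'\Psi\rangle$, and both $O^{*}\Phi$ and $O'\Psi$ are norm limits of the partial sums $S_{K}^{O^{*}}\Phi:=\sum_{n=0}^{K}O_{n}^{*}\Phi$ and $S_{K}^{O'}\Psi:=\sum_{m=0}^{K}O'_{m}\Psi$ whose terms lie in $\tilde D$. Since $O_{n}^{*}\Phi,\,O'_{m}\Psi\in\tilde D\subset\cH$, the rewriting $\langle O_{n}^{*}\Phi,O'_{m}\Psi\rangle=\langle\Phi,O_{n}O'_{m}\Psi\rangle$ is unambiguous, and continuity of the inner product yields
\[
\langle\Phi,OO'\Psi\rangle=\lim_{K\to\infty}\langle S_{K}^{O^{*}}\Phi,S_{K}^{O'}\Psi\rangle=\lim_{K\to\infty}\sum_{n,m\leq K}\langle\Phi,O_{n}O'_{m}\Psi\rangle.
\]

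For the right-hand side, the relation $O''_{k}=\sum_{n+m=k}O_{n}O'_{m}$ together with norm convergence of $\sum_{k}O''_{k}\Psi$ to $O''\Psi$ gives
\[
\langle\Phi,O''\Psi\rangle=\lim_{K\to\infty}\sum_{k=0}^{K}\langle\Phi,O''_{k}\Psi\rangle=\lim_{K\to\infty}\sum_{n+m\leq K}\langle\Phi,O_{n}O'_{m}\Psi\rangle.
\]

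The crux of the proof is then to show that the ``square'' and ``triangular'' partial sums of the double series $c_{n,m}:=\langle\Phi,O_{n}O'_{m}\Psi\rangle$ share the same limit; equivalently, that the remainder
\[
R_{K}:=\sum_{\substack{n,m\leq K\\ n+m>K}}c_{n,m}=\sum_{n=1}^{K}\big\langle O_{n}^{*}\Phi,\,(S_{K}^{O'}-S_{K-n}^{O'})\Psi\big\rangle
\]
tends to $0$. I expect this to be the main obstacle, and also the point where the hypothesis of norm convergence \emph{both} of the series and of its adjoint is essential. The plan is to split at an intermediate cutoff: for $n$ small relative to $K$ the remainder $(S_{K}^{O'}-S_{K-n}^{O'})\Psi$ becomes uniformly small in norm because $S_{M}^{O'}\Psi\to O'\Psi$, while for $n$ large one rewrites $R_{K}$ symmetrically as $\sum_{m=1}^{K}\langle(S_{K}^{O^{*}}-S_{K-m}^{O^{*}})\Phi,O'_{m}\Psi\rangle$ and applies the analogous estimate using the norm convergence of the adjoint series. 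Once $R_{K}\to 0$ is established, the two matrix elements agree on $D$, and density of $D$ delivers $OO'=O''$.
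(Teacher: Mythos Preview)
Your overall strategy coincides with the paper's: reduce to matrix elements on the dense set $D$, move $O$ to the adjoint side, expand both series, and conclude by density. The paper's proof is literally the chain
\[
\langle\Psi',OO'\Psi\rangle=\langle O^{*}\Psi',O'\Psi\rangle=\sum_{n,m}\langle O_{n}^{*}\Psi',O'_{m}\Psi\rangle=\sum_{k}\sum_{n+m=k}\langle\Psi',O_{n}O'_{m}\Psi\rangle=\langle\Psi',O''\Psi\rangle,
\]
and you reproduce this with more care. You also correctly isolate the one nontrivial passage, namely the rearrangement from square to triangular partial sums, which the paper passes over without comment.

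Your proposed argument for $R_{K}\to 0$, however, does not close as sketched. Splitting at an intermediate $L$ and bounding the small-$n$ contribution by $\sum_{n\le L}\|O_{n}^{*}\Phi\|\cdot\|(S_{K}^{O'}-S_{K-n}^{O'})\Psi\|$ does control that piece for fixed $L$ as $K\to\infty$. But the ``large $n$'' part cannot simply be handled by passing to the symmetric expression $\sum_{m=1}^{K}\langle(S_{K}^{O^{*}}-S_{K-m}^{O^{*}})\Phi,O'_{m}\Psi\rangle$: that formula is \emph{all} of $R_{K}$, not its $n>L$ portion, and in the triangular region $\{n+m>K,\ n,m\le K\}$ the set $\{n>L\}$ does not correspond to any set $\{m\le L'\}$. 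Without absolute convergence of $\sum_{n}\|O_{n}^{*}\Phi\|$ (which is not assumed) you are left with a middle region you cannot bound, and the argument stalls.

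Two remedies. First, in the paper's actual application the norm estimates give $\|\pi_{\psi}(S_{n})\Psi\|\le C^{n}(n!)^{(1-p)/p}$ with $p>1$, so the relevant series converge \emph{absolutely} and the rearrangement is immediate; this is what the paper tacitly relies on. Second, if you want the proposition in the stated generality (mere strong convergence), the equality of square and triangular limits is exactly Abel's theorem on Cauchy products, and its proof carries over verbatim to Hilbert-space-valued summands: setting $A_{N}=\sum_{n\le N}O_{n}^{*}\Phi$, $B_{N}=\sum_{m\le N}O'_{m}\Psi$, $C_{N}=\sum_{k\le N}\langle\Phi,O''_{k}\Psi\rangle$, one checks the combinatorial identity
\[
\sum_{k=0}^{N}C_{k}=\sum_{k=0}^{N}\langle A_{k},B_{N-k}\rangle,
\]
and then Ces\`aro-averages both sides, using $A_{k}\to O^{*}\Phi$, $B_{k}\to O'\Psi$, $C_{N}\to\langle\Phi,O''\Psi\rangle$, to obtain $\langle\Phi,O''\Psi\rangle=\langle O^{*}\Phi,O'\Psi\rangle$. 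This replaces your $R_{K}$ estimate and completes the argument.
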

\begin{proof}
Let $\Psi,\Psi'\in D$. Then
\[\langle\Psi',OO'\Psi\rangle=\langle O^*\Psi,O'\Psi\rangle=\sum_{n,m} \langle O^*_n\Psi', O'_m\Psi\rangle=\sum_{n,m}\langle\Psi',O_nO'_m\Psi\rangle\]
\[=\sum_k\langle \Psi',\sum_{n+m=k}O_nO'_m\Psi\rangle=\sum_k\langle\Psi', O''_k\Psi\rangle=\langle\Psi',O''\Psi\rangle\]
Since $D$ is dense, the proposition follows. 
\end{proof}
To prove the factorization relation it now suffices to show that also the $\star$-products of S-matrices and their adjoints converge in the representation $\pi_{\psi}$. This amounts to the same estimates as above. By the proposition it then follows that also the relations between the corresponding unitary operators hold.

We finally define a net of von Neumann algebras associated to the sine-Gordon model. For a fixed bounded region $\cO$ we choose some $g\in G_{\cO}$ with $G=\const$ and consider the seminorms
\[||A||_{\Psi,g}=|\langle\Psi,A(g)\Psi\rangle|\]
The set of seminorms does not depend on the choice of $g$. We therefore can complete each algebra 
$\fA_G(\cO)$ and obtain a net of von Neumann algebras with normal embeddings. 

\section{Relation to the Thirring model}
The (massless) Thirring model is a theory of a massless Dirac field in 2 dimensions with a current-current interaction. It is closely related to the massless scalar field. Nevertheless, its history is quite involved, with a lot of partially contradicting treatments. Especially fascinating is that this relation extends to a corresponding relation between the massless sine-Gordon model and the massive Thirring model, first described by Coleman
\cite{Coleman75}. Usually one even claims equivalence between these theories, but this remains vague in the absence of a precise definition of equivalence (see \cite{Benfatto} for the state of the art).

One problem treated in the literature which induced a lot of confusion is the absence of a vacuum state for the massless scalar field. If the vacuum is replaced by the  Poincar\'{e} invariant linear, but non-positive functional induced by the 2-point function $H_{\mu}$, one has difficulties to prove the positivity
of the Wightman functions of the Thirring model, as pointed out by Wightman \cite{Wightman67} and finally solved by Carey et al \cite{Carey85}.

In our treatment we start from the realization of the massless scalar field in terms of functionals on the space of smooth functions on Minkowski space.
We need in addition a dual field $\tilde{\Phi}$ which 
satisfies the relation $\partial_{\mu}{\Phi}=-\epsilon_{\mu\nu}\partial^{\nu}\tilde{\Phi}$ 
with the antisymmetric symbol $\epsilon_{\mu\nu}$ with $\epsilon_{01}=1$. Instead of imposing conditions on the field configurations which guarantee the existence of $\tilde{\phi}$,
we double the configuration space and consider functionals of pairs of smooth functions $(\phi,\tilde{\phi})$ which are a priori independent from each other. Later we divide out the subspace $\mathcal{I}$ of functionals vanishing on pairs which are solutions of the wave equation and satisfy the condition
\be
\partial_{\mu}{\phi}=-\epsilon_{\mu\nu}\partial^{\nu}\tilde{\phi} \ .
\ee
In terms of the lightcone variables $u=x^0+x^1$ and $v=x^0-x^1$ this means 
\be\label{eq:defphitildeLC}
\partial_u\tilde{\phi}=\partial_u\phi\ ,\ \partial_v\tilde{\phi}=-\partial_v\phi\ .
\ee
We then introduce a $\star$-product which extends the Weyl-Moyal $\star$-product of the theory of a single massless field such that $\mathcal{I}$ becomes an ideal.
We set
\[
\tilde{\Phi}(x)\star\tilde{\Phi}(y)-\tilde{\Phi}(x)\tilde{\Phi}(y)=\frac{i}{2}\Delta(x,y)
\]
and
\[
\tilde{\Phi}(x)\star\Phi(y)=\tilde{\Phi}(x)\Phi(y)+\frac{i}{2}\tilde{\Delta}(x,y)\ ,
\]
\[
\Phi(x)\star\tilde{\Phi}(y)=\Phi(x)\tilde{\Phi}(y)-\frac{i}{2}\tilde{\Delta}(y,x)\ ,
\]
with 
\be
\tilde{\Delta}(x,y)=\frac{1}{2}(\Theta(-u)-\Theta(v)) 
\ .
\ee
(where here and in the following $(x-y)=\frac12(u+v,u-v)$).
We see that $\tilde{\Phi}$ is not relatively local to $\Phi$. One could modify $\tilde{\Delta}$ by adding a constant. The chosen version later turns out to be convenient.

It is now easy to find functionals with fermionic commutation relations. 
For this purpose we pass from the Weyl-Moyal $\star$-product $\star$ to the Wick $\star$-product $\star_{\mu}$
which is  induced by the linear isomorphism $e^{\Gamma_{\mu}}$,
\[
F\star_{\mu}G=e^{\Gamma_{\mu}}(e^{-\Gamma\mu}F\star\ e^{-\Gamma_\mu}G)\ ,
\]
with 
\[
\Gamma_\mu=\int\frac12
\left(
\begin{array}{cc}
	\frac{\delta}{\delta\phi}&\frac{\delta}{\delta\tilde{\phi}}
\end{array}
\right)
\left(
\begin{array}{cc}
	H_{\mu}&\tilde{H}\\
	\tilde{H}&H_{\mu}
\end{array}
\right)
\left(
\begin{array}{c}
	\frac{\delta}{\delta\phi}\\
	\frac{\delta}{\delta\tilde{\phi}}
\end{array}
\right) \ . 
\]
Here 
\be
\tilde{H}(x,y)=-\frac{1}{4\pi}\ln\left|\frac{u}{v}\right|\ .
\ee
The operator $e^{-\Gamma_\mu}$ performs the normal ordering of functionals of the field,
\be
\no{F}\nolimits_{\mu}=e^{-\Gamma_\mu}(F)
\ee
for regular functionals $F$. 
As in Section 2, we set $\mu=1$ and obtain dimensionful normal ordered functionals.
The Wick $\star$-product can be extended to all local functionals of the fields $\phi$ and $\tilde{\phi}$. 
Note that $\tilde{H}$ is not Lorentz invariant. Therefore normal ordering is changed by Lorentz transformations, and we obtain the adapted action of Lorentz transformations
\be
(\sigma_{\Lambda(\theta)}\no{F})(\phi,\tilde{\phi})=e^{-\frac{\theta}{2\pi}\frac{\delta^2}{\delta\phi\delta\tilde{\phi}}}\no{F}(\phi\circ\Lambda(\theta),\tilde{\phi}\circ\Lambda(\theta))
\ee
with
\[
\Lambda(\theta)=
\left(\begin{array}{cc}
	\cosh\theta & \sinh\theta\\
	\sinh\theta & \cosh \theta
\end{array}\right)\ .
\]
Exponential functions of linear combinations of the scalar fields (''vertex operators'') 
transform as
\[
\sigma_{\Lambda(\theta)}\no{e^{i(\alpha\Phi(x)+\beta\tilde{\Phi}(x))}}=e^{\frac{\alpha\beta}{2\pi}\theta}\no{e^{i(\alpha\Phi(\Lambda(\theta)x)+\beta\tilde{\Phi}(\Lambda(\theta)x))}}\ .
\]
For their $\star$-product we obtain
\begin{multline}\label{eq:prod_exp}
\no{e^{i(\alpha\Phi(x)+\beta\tilde{\Phi}(x))}}\star \no{e^{i(\alpha'\Phi(y)+\beta'\tilde{\Phi}(y))}}=\no{e^{i(\alpha\Phi(x)+\beta\tilde{\Phi}(x)+\alpha'\Phi(y)+\beta'\tilde{\Phi}(y))}} \\
\times(i u+\epsilon)^{\frac{(\alpha+\beta)(\alpha'+\beta')}{4\pi}}(i v+\epsilon)^{\frac{(\alpha-\beta)(\alpha'-\beta')}{4\pi}}e^{\frac{i(\alpha\beta'-\alpha'\beta)}{4}}\ .
\end{multline}
and thus
\begin{align}\label{eq:anti}
\no{e^{i(\alpha\Phi(x)+\beta\tilde{\Phi}(x))}}\star\no{e^{i(\alpha'\Phi(y)+\beta'\tilde{\Phi}(y))}}&=\no{e^{i(\alpha\Phi(y)+\beta\tilde{\Phi}(y))}}\star\no{e^{i(\alpha'\Phi(x)+\beta'\tilde{\Phi}(x))}}\nonumber\\
&\times e^{-i(\alpha\alpha'+\beta\beta')\Delta(x,y)-i(\alpha\beta'+\alpha'\beta)\tilde{\Delta}(x,y)+i\alpha\beta'}\ .
\end{align}
Since, for spacelike separated arguments, $\Delta$ vanishes and $\tilde{\Delta}$ assumes the values $0$ and $1$, we see that the vertex operators anticommute if $\alpha\beta',\alpha'\beta=\pm\pi$.

We now set for given $\alpha>0$
\beqan
\psi_{+}(x)&=&-i(2\pi)^{-\frac12}\no{e^{i(\alpha\Phi(x)+\frac{\pi}{\alpha}\tilde{\Phi}(x))}} 
\\
\psi_{-}(x)&=&(2\pi)^{-\frac12}\no{e^{i(-\alpha\Phi(x)+\frac{\pi}{\alpha}\tilde{\Phi}(x))}} 
\eeqan
and consider them as dimensionful fields, as in Section 2.  The prefactor $-i$ for $\psi_+$ will turn out later to be convenient for a simple choice of $\gamma$-matrices (see \cite{Mandelstam}).

These fields and their adjoints anticommute according to \eqref{eq:anti}.  
Moreover, they transform under Lorentz transformation as
\[
\sigma_{\Lambda(\theta)}\psi_{\pm}(x)=e^{\pm\frac{\theta}{2}}\psi_{\pm}(\Lambda(\theta)x)\ .
\]

\paragraph{\textbf{Specialization to $\mathbf{\alpha=\sqrt{\pi}}$}}. We now show that the case $\alpha=\sqrt{\pi}$  corresponds to free chiral massless Fermi fields. The anticommutation relations are
\beqa
\{\psi_+(x),\psi_+(y)\}&=\ 0 \ =& \{\psi_-(x),\psi_-(y)\}\ ,
\\
\{\psi_+(x),\psi_-(y)\}&=\ 0\ =& \{\psi_+^*(x),\psi^*_-(y)\}\ ,
\eeqa
and
\be
\{\psi_+^*(x),\psi_+(y)\}=e^{i\sqrt{\pi}(-\Phi(x)+\Phi(y)-\tilde{\Phi}(x)+\tilde{\Phi}(y))}\delta(u)=\delta(u)
\ee
since by \eqref{eq:defphitildeLC}, $\Phi+\tilde{\Phi}$ does not depend on $v$, and analogously
\be
\{\psi_-^*(x),\psi_-(y)\}=\delta(v)\ .
\ee
Moreover, again by \eqref{eq:defphitildeLC}, we have for $\alpha=\sqrt{\pi}$,
\beqa
\partial_v\psi_+(x)&=&i(\partial_v\Phi(x)+\partial_v\tilde{\Phi}(x))\psi_+(x)=0 \\
\partial_u\psi_-(x)&=&i(-\partial_u\Phi(x)+\partial_u\tilde{\Phi}(x))\psi_+(x)=0
\eeqa

From the operator product expansions of these Fermi fields, we can reconstruct the derivatives of $\Phi$,
\be
\lim_{y\to x}\psi_+^*(x)\star\psi_+(y)-(2\pi)^{-1}(iu+\epsilon)^{-1}=-\pi^{-\frac12}\partial_u\Phi(x)
\ee 
and
\be
\lim_{y\to x}(\psi_-^*(x)\star\psi_-(y)-(2\pi)^{-1}(iu+\epsilon)^{-1}=\pi^{-\frac12}\partial_v\Phi(x)
\ee 
and likewise, exponential functions of $\Phi$ by  
\be
\psi_-^*(x)\star\psi_+(x)=\frac{1}{2\pi}e^{i2\sqrt{\pi}\Phi(x)}
\ee
and
\be
\psi_+^*(x)\star\psi_-(x)=\frac{1}{2\pi}e^{-i2\sqrt{\pi}\Phi(x)} \ .
\ee

\paragraph{\textbf{Generic $\mathbf{\alpha>0}$.}} 
From \eqref{eq:prod_exp} we obtain
\begin{align*}
&\psi_+^*(x)\star\psi_+(y)\\
=&(2\pi)^{-1}
(iu+\epsilon)^{-\frac{(\alpha+\frac{\pi}{\alpha})^2}{4\pi}}(iv+\epsilon)^{-\frac{(\alpha-\frac{\pi}{\alpha})^2}{4\pi}}e^{-i(\alpha(\Phi(x)-\Phi(y))+\frac{\pi}{\alpha}(\tilde{\Phi}(x)-\tilde{\Phi}(y))}\\
=&(2\pi)^{-1}(iu+\epsilon)^{-1}
(-uv+i(u+v)\epsilon)^{-\frac{(\alpha-\frac{\pi}{\alpha})^2}{4\pi}}e^{-i(\alpha(\Phi(x)-\Phi(y))+\frac{\pi}{\alpha}(\tilde{\Phi}(x)-\tilde{\Phi}(y))}\ .
\end{align*}
As in the previous case of free Fermions we expand the exponential function for $x\approx y$ and find
for spacelike separated points an expansion of the form
\[
\psi_+^*(x)\star\psi_+(y)=(2\pi)^{-1}|uv|^{-(\frac{(\alpha-\frac{\pi}{\alpha})2}{4\pi}}\left((iu+\epsilon)^{-1}+(\alpha+\frac{\pi}{\alpha})\partial_u\Phi(x)+\ldots\right)
\]
where the omitted terms vanish in the spacelike coincidence limit $y\stackrel{K}{\to} x$ (i.e. there is a closed spacelike cone $K$ such that $y-x\in K$).
We obtain the formula
\[
N(\psi_+^*\psi_+)(x):=\lim_{y\stackrel{K}{\to} x}|(y-x)^2|^{\frac{(\alpha-\frac{\pi}{\alpha})2}{4\pi}}\left(\psi_+(x)\star\psi_+(y)-(2\pi)^{-1}(iu)^{-1}\right)=(\frac{\alpha}{2\pi}+\frac{1}{2\alpha})\partial_u\Phi(x) \ .
\]

By an analogous argument we obtain
\[
N(\psi_-^*\psi_-)(x)=-(\frac{\alpha}{2\pi}+\frac{1}{2\alpha})\partial_v\Phi(x)\ .
\]
For the mixed products we find
\[
\lim_{y\to x}|(x-y)^2|^{\frac{\alpha^2}{4\pi}-\frac{\pi}{4\alpha^2}}\psi_+^*(x)\star\psi_-(y)=\frac{1}{2\pi}e^{-2i\alpha\Phi(x)}
\]
and
\[
\lim_{y\to x}|(x-y)^2|^{\frac{\alpha^2}{4\pi}-\frac{\pi}{4\alpha^2}}\psi_-^*(x)\star\psi_+(y)=\frac{1}{2\pi}e^{2i\alpha\Phi(x)}\ .
\]

\paragraph{\textbf{Equations of the motion}} For the equations of motion, we find
\be\label{eq:eom1}
\partial_v\psi_+(x)=i(\alpha-\frac{\pi}{\alpha})(\partial_v{\Phi}(x))\psi_+(x)
\ee
and
\be\label{eq:eom2}
\partial_u\psi_-(x)=-i(\alpha-\frac{\pi}{\alpha})(\partial_u{\Phi}(x))\psi_-(x)\ .
\ee
We want to interpret these equations as the field equations of the Thirring model. We consider 
\be
\psi=\left(\begin{array}{c}
	\psi_+\\
	\psi_-
\end{array}\right)
\ee
as a Dirac field.
As discussed before, Lorentz transformations act on these fields as
\be
\Lambda(\theta)\cdot\psi(x)=\left(\begin{array}{c}
	e^{\frac{\theta}{2}}\psi_+(\Lambda(\theta)x)\\
	e^{-\frac{\theta}{2}}\psi_-(\Lambda(\theta)x)
\end{array}\right)
\ee 
The conjugate Dirac field can be defined by
\be
\overline{\psi}=(\psi_+^*,\psi_-^*)\gamma^0=(\psi_-^*,\psi_+^*)
\ee
with the matrix
\be
\gamma^0=\left(\begin{array}{cc}
	0&1\\
	1&0\end{array}\right)\ .
\ee
For $\gamma^1$ we choose
\be
\gamma^1=\left(\begin{array}{cc}
	0&-1\\
	1&0\end{array}\right)\ .
\ee
Then the 2-vector with entries $\gamma^0$ and $\gamma^1$ transforms under Lorentz transformations as a point in 2d-Minkowski space.

The field equation in the classical Thirring model is
\be
\gamma^{\mu}(i\partial_\mu-g(\overline{\psi}\gamma_{\mu}\psi))\psi=0\ .
\ee
Inserting the above definitions we find the coupled system of equations
\begin{align}
i\partial_u\psi_-=&g\psi_+^*\psi_+\psi_-\\
i\partial_v\psi_+=&g\psi_-^*\psi_-\psi_+\ .
\end{align} 
For the quantum theory we have to replace the classical currents $j_{u,v}=\psi_{\pm}^*\psi_{\pm}$ by the suitably normalized normal products $N(\psi_{\pm}^*\psi_{\pm})$ such that the charge associated to the fields $\psi_{\pm}$ is equal to $-1$. Then 
\be
j_{u}=\frac{\alpha}{\pi}\partial_{u}{\Phi}\ , \ j_{v}=-\frac{\alpha}{\pi}\partial_{v}{\Phi}
\ee
hence the equations of motion (\ref{eq:eom1},\ref{eq:eom2}) coincide with the equations of motion for the Thirring model with the coupling constant 
\be
g=\frac{\pi^2}{\alpha^2}-\pi .
\ee
The interaction term of the sine-Gordon model coincides with a mass term in the Thirring model,
\be
\no{\cos{\beta\Phi(x)}}=\pi N(\overline{\psi}\psi)(x)
\ee 
with 
\be
N(\overline{\psi}\psi)(x)=\lim_{y\to x}|(x-y)^2)|^{\frac{\alpha^2}{4\pi}-\frac{\pi}{4\alpha^2}}\psi_+^*(x)\star\psi_-(y)
\ee
and $\alpha=\beta/2$. This observation is the basis for Coleman's argument for the equivalence of both models.

{\rem 
In our framework we see that the construction of the observable algebras of the sine-Gordon model also yields  the observables of the massive Thirring model. The local algebras of the massive Thirring model, however, are proper subalgebras of the algebra of the sine-Gordon model and consist only of elements which are invariant under the automorphism induced by $\phi\to\phi+2\pi/\beta$.
}

\section{Conclusion and Outlook}

We constructed the net of local observable algebras (Haag-Kastler net) for the 2 dimensional sine-Gordon model in the ultraviolet finite regime. In spite of many previous works on this model (see \cite{Benfatto} for an overview), this is, to the best of our knowledge, the first complete construction. It was obtained within the formalism of perturbative Algebraic Quantum Field Theory. The von Neumann algebras associated to bounded regions are subalgebras of the algebra of the free massless scalar field in the Derezinski-Meissner \cite{DM06} representation. 
In this representation there is neither a vacuum state for the free field (because of infrared problems) nor a vacuum state for the sine-Gordon model (due to Haag's Theorem). We proved that locally (i.e. restricted to local subalgebras) the representation (considered as a representation of the Weyl algebra of time-zero fields) is quasiequivalent to the vacuum representations of massive scalar fields. It is therefore to be expected that this remains true for the vacuum representation of the sine-Gordon model. We also showed that the formalism of pAQFT allows a treatment of the relation to the massive Thirring model where the formulae known from previous work \cite{Mandelstam} get a mathematically precize meaning.

Our work opens the perspective to investigate this model in more detail. We expect that the integrable structure of the classical model shows up in the existence of infinitely many conserved currents, as suggested from perturbation theory \cite{Lowenstein}. We already showed that the DM representation has a 1-parameter family of superselection sectors, and a countable subset should  represent the charged sectors of the Thirring model.

The major open problem is the existence of a vacuum representation. The problem to overcome is the slow decay of correlations in a framework starting from the massless field. In spirit, this is similar to a problem also pointed out by Hollands and Wald~\cite{HollandsWald}, who suggested that the
perturbation series for the operator product expansion (OPE) coefficients of a QFT might converge,
but that it is less clear how to construct states perturbatively. A good understanding of this problem is needed in our approach to the sine-Gordon model, and also other similar special 2-dimensional models such as the Gross-Neveu model or  $P(\phi)_2$-theory should be investigated in this spirit.

Once the vacuum representation is constructed, one could try to prove the factorization condition for the S-matrix in the sine-Gordon model. It would be important to relate our construction to the form factor program \cite{Smirnov} and to the Lechner program \cite{Lechner} for the construction of the model. We hope to come back to these problems in future work.


\subsection*{Acknowledgement}

DB and KR would like to thank the Isaac Newton Institute, Cambridge, (programme OAS: \textit{Operator algebras: subfactors and their applications}) and the MFO, Oberwolfach, (workshops 1737 and 1748) for kind hospitality. We would like to thank Daniela Cadamuro for interesting comments. This research was partially supported by KR's EPSRC grant \verb|EP/P021204/1| .


\medskip

\begin{appendix}

\section{Explicit formulae}\label{app:scalarP}

\paragraph{\textbf{Scalar products and time-zero fields}}
In this section we recall some basic facts about the Fock representation of the massive scalar field and compare these with analogous structures in the DM representation.

Let us start with fixing some notation. Let $f\in\cS(\R,\C)$. Define the smeared creation and annihilation operators by
\[
a(f)\doteq \int a(\mathbf{k}) \overline{f(\mathbf{k})} d\mathbf{k}\,,\quad a^\dagger(f)\doteq \int a^\dagger(\mathbf{k}) \overline{f(\mathbf{k})} \,.
\]
with the commutation relations
\[
[a(f),a^\dagger(g)]=\int \overline{f(\mathbf{k})} g(\mathbf{k}) d\mathbf{k}\,.
\]
Introduce a measure $\omega_m:\R\rightarrow \R_+$, $\omega_m(\mathbf{k})\doteq \sqrt{\mathbf \mathbf{k}^2+m^2}$. 

In the massive case, for $f_1,f_2\in\cD(\R,\C)$ we define 
\begin{align*}
\phi_m(f_1)\doteq \frac{1}{\sqrt{2}}(a^\dagger(\omega_m^{-1/2}\hat{f}_1)+a(\omega_m^{-1/2}\hat{\bar f}_1))\,,\\
{\pi}_m(f_2)\doteq \frac{1}{\sqrt{2}}(a^\dagger(i\omega_m^{-1/2}\hat{f}_2)+a(i\omega_m^{-1/2}\hat{\bar f}_2))\,,
\end{align*}
where we use the following definitions of the Fourier transform and its inverse:
\begin{align*}
\hat{f}(\mathbf{k})&\doteq \frac{1}{\sqrt{2\pi}} \int e^{-i\mathbf{k} \mathbf{x}} f(\mathbf{x}) d\mathbf{x}\,,\\
\check{f}(\mathbf{x})&\doteq \frac{1}{\sqrt{2\pi}} \int e^{i\mathbf{k} \mathbf{x}} g(\mathbf{k}) d\mathbf{k}\,.
\end{align*}
Hence, explicitly, the time-zero fields are represented by the following operator-valued distributions:
\begin{align*}
\phi_m(\mathbf{x})&=\frac{1}{\sqrt{2}}\int \frac{1}{\sqrt{2\omega_m(\mathbf{k})}}\left(e^{-i\mathbf{k}\mathbf{x}}a^\dagger(\mathbf{k})+e^{i\mathbf{k}\mathbf{x}}a(\mathbf{k})\right)d\mathbf{k}\\
\pi_m(\mathbf{x})&=\frac{1}{\sqrt{2}}\int \sqrt{\frac{\omega_m(\mathbf{k})}{2}}\left(e^{-i\mathbf{k}\mathbf{x}}a^\dagger(\mathbf{k})-e^{i\mathbf{k}\mathbf{x}}a(\mathbf{k})\right) d\mathbf{k}
\end{align*}
For the time-zero field and time-zero momentum in the DM representation we apply the  representation $\pi_\psi$  given by formula \eqref{eq:RepDM} to $\Phi(f)$ where $f=\delta_{t=0} f_1({\bf x})dtd\bf x$ and $f=\delta^\prime_{t=0} f_1({\bf x})dtd\bf x$ and where we choose a test density $\psi$ of the form $\delta_{t=0} \psi_1 ({\bf x})dt d\bf x$ and by abuse of notation denote $\psi_1$ again by $\psi$. This gives
\beqa
\phi_s(f_1)=\phic\left(\delta_{t=0} P_\psi f_1\right)\otimes 1  + 1\otimes q\langle 1 , f_1\rangle=\phi_0(P_\psi f_1) + 1\otimes q\langle 1
\eeqa
for the time-zero field, and 
\beqa
\pi_s(f_1)=\dot\phic\left(\delta_{t=0}  f_1\right)  
 - 1\otimes p \langle f_1, \psi\rangle= \pi_0(f_1) - 1\otimes p \langle f_1, \psi\rangle
\eeqa
for the time-zero momentum.

Now let $f=(f_1,f_2)\in L(\R)=\mathcal{D}(\mathbb{R}, \C)\oplus\mathcal{D}(\mathbb{R},\C)$. Define a family of operators
\[
B_j(f)\doteq \phi_j(f_1)+\pi_j(f_2)\,,
\]
where $j=m,s$.

We obtain the products $\left<.,.\right>_j$, $j=m,s$ by using the prescription:
\[
\left<f,g\right>_j\doteq \omega_j(B_j(\bar f)B_j(g))\,.
\]
This is a straightforward calculation, but for the convenience  of the reader we spell out some details in the case $j=s$. Recall that
\[
\omega_s(B_j(\bar f)B_j(g))=\left<\Omega, B_j(\bar f)B_j(g)\Omega\right>\,.
\]
where $\Omega$ is given by \eqref{eq:defSchubertVectorState}. Hence
\begin{multline*}
\omega_s(B_j(\bar f)B_j(g))=\left<\Omega,(\phi_s(\bar{f}_1)+\pi_s(\bar{f}_2))(\phi_s(g_1)+\pi_s(g_2))\Omega\right>\\
=\frac{1}{2}\int\overline{\left(\tfrac{1}{\sqrt{\omega_0}}\widehat{P_\psi f_1}+i\sqrt{\omega_0} \widehat{f}_2\right)}\left(\tfrac{1}{\sqrt{\omega_0}}\widehat{P_\psi g_1}+i\sqrt{\omega_0} \widehat{g}_2\right)\\+\left<\Omega_r,q^2\Omega_r\right>\int \bar{f}_1\int g_1+\left<\Omega_r,p^2\Omega_r\right>\int \bar{f}_2\psi\int g_2\psi\\
+\left<\Omega_r,\left(qp \int\bar{f}_1\int g_2\psi+pq\int \bar{f}_2\psi\int g_1\right)\Omega_r\right>
\end{multline*}
Note that $\omega_0=|\mathbf{k}|$. Inserting expectation values of $q^2$, $p^2$, $qp$, and rearranging, we obtain
\beqa
\langle f,g\rangle_{s}&=&
\tsf{1}{2}\int \left(|\mathbf k|^{-\frac12}\overline{\widehat{P_{\psi}f_1}}-i|\mathbf k|^{\frac12}\overline{\hat{f_2}}
\right)\left(|\mathbf k|^{-\frac12}\widehat{P_{\psi}g_1}+i|\mathbf k|^{\frac12}\hat{g_2}\right)\, d\mathbf k \\
&& + \ 
\tsf 1 2\int (r \overline{f_1}-\tsf{i}{r} \psi  \overline{f_2})d\mathbf x\int  (rg_1+\tsf{i}{r}\psi g_2)d\mathbf y\,.
\eeqa
\end{appendix}

\bigskip\noindent
{\sc \small Dorothea Bahns, Mathematisches Institut, Georg-August-Universit{\"a}t G{\"o}ttingen, Germany. \verb+dbahns@mathematik.uni-goettingen.de+
}

\medskip\noindent
{\sc \small Klaus Fredenhagen,
II. Institut f\"ur Theoretische Physik, Universit\"at Hamburg, Germany. 
\verb+klaus.fredenhagen@desy.de+
}
 
\medskip\noindent
{\sc \small Kasia Rejzner,
 	Department of Mathematics, University of York, United Kingdom. 
\verb+kasia.rejzner@york.ac.uk+
}

\end{document}